\documentclass[manuscript,screen,acmtog]{acmart}

\usepackage{empheq}
\AtBeginDocument{%
  }

\setcopyright{acmlicensed}
\copyrightyear{2018}
\acmYear{2018}
\acmDOI{XXXXXXX.XXXXXXX}

\acmConference[Conference acronym 'XX]{Make sure to enter the correct
  conference title from your rights confirmation email}{June 03--05,
  2018}{Woodstock, NY}
\acmISBN{978-1-4503-XXXX-X/18/06}

\acmSubmissionID{560}

\usepackage{algorithm}
\usepackage{algpseudocode}
\usepackage{subfigure} 
\usepackage{mdframed}
\usepackage{xfrac}

\definecolor{myblue}{RGB}{230, 240, 255} % Light blue background
\definecolor{myborder}{RGB}{50, 100, 200} % Darker blue border

\mdfdefinestyle{mdfexample1}{
    backgroundcolor=gray!10,
    linecolor=white,
    linewidth=2pt,
    frametitlebackgroundcolor=gray!20,
    frametitlerule=true,
    frametitlerulewidth=1pt,
    frametitleaboveskip=5pt,
    frametitlebelowskip=5pt,
    nobreak=true,
}
\begin{document}

%% The "title" command has an optional parameter,
%% allowing the author to define a "short title" to be used in page headers.
\title{Monte Carlo PDE Solvers for Nonlinear Radiative Boundary Conditions}

%%
%% By default, the full list of authors will be used in the page
%% headers. Often, this list is too long, and will overlap
%% other information printed in the page headers. This command allows
%% the author to define a more concise list
%% of authors' names for this purpose.
\renewcommand{\shortauthors}{Bao et al.}
\newcommand{\changed}[1]{{#1}}
\newcommand{\changedd}[1]{{#1}}

\author{Anchang Bao}
\affiliation{%
  \institution{School of Software, BNRist, Tsinghua University}
  \city{Beijing}
  \country{China}
}
\email{baoanchang02@gmail.com}

\author{Enya Shen}
\affiliation{%
  \institution{School of Software, BNRist, Tsinghua University}
  \city{Beijing}
  \country{China}
  }
\affiliation{%
  \institution{Haihe Lab of ITAI}
  \city{Tianjin}
  \country{China}
  }

\email{shenenya@tsinghua.edu.cn}

\author{Jianmin Wang}
\affiliation{%
  \institution{School of Software, BNRist, Tsinghua University}
  \city{Beijing}
  \country{China}
  }
\email{jimwang@tsinghua.edu.cn}

%%
%% The abstract is a short summary of the work to be presented in the
%% article.

\begin{abstract}
    Monte Carlo PDE solvers have become increasingly popular for solving heat-related partial differential equations in geometry processing and computer graphics due to their robustness in handling complex geometries. While existing methods can handle Dirichlet, Neumann, and linear Robin boundary conditions, nonlinear boundary conditions arising from thermal radiation remain largely unexplored.

    In this paper, we introduce a Picard-style fixed-point iteration framework that enables Monte Carlo PDE solvers to handle nonlinear radiative boundary conditions. \changedd{While strict theoretical convergence is not generally guaranteed, our method remains stable and empirically convergent with a properly chosen relaxation coefficient. Even with imprecise initial boundary estimates, it progressively approaches the correct solution.} Compared to standard linearization strategies, the proposed approach achieves significantly higher accuracy.
    
    To further address the high variance inherent in Monte Carlo estimators, we propose a heteroscedastic regression-based denoising technique specifically designed for on-boundary solution estimates, filling a gap left by prior variance reduction methods that focus solely on interior points. We validate our approach through extensive evaluations on synthetic benchmarks and demonstrate its effectiveness on practical heat radiation simulations with complex geometries.
\end{abstract}

%%
%% The code below is generated by the tool at http://dl.acm.org/ccs.cfm.
%% Please copy and paste the code instead of the example below.
%%
\begin{CCSXML}
<ccs2012>
<concept>
<concept_id>10010147.10010371.10010396.10010402</concept_id>
<concept_desc>Computing methodologies~Shape analysis</concept_desc>
<concept_significance>500</concept_significance>
</concept>
</ccs2012>
\end{CCSXML}

\ccsdesc[500]{Computing methodologies~Shape analysis}

% \ccsdesc[500]{Do Not Use This Code~Generate the Correct Terms for Your Paper}
% \ccsdesc[300]{Do Not Use This Code~Generate the Correct Terms for Your Paper}
% \ccsdesc{Do Not Use This Code~Generate the Correct Terms for Your Paper}
% \ccsdesc[100]{Do Not Use This Code~Generate the Correct Terms for Your Paper}

%%
%% Keywords. The author(s) should pick words that accurately describe
%% the work being presented. Separate the keywords with commas.
\keywords{PDE, Monte Carlo Solvers, Heat Radiation, Variance Reduction}

\received{20 February 2007}
\received[revised]{12 March 2009}
\received[accepted]{5 June 2009}

\begin{teaserfigure}
    \centering
    \includegraphics[width=\linewidth]{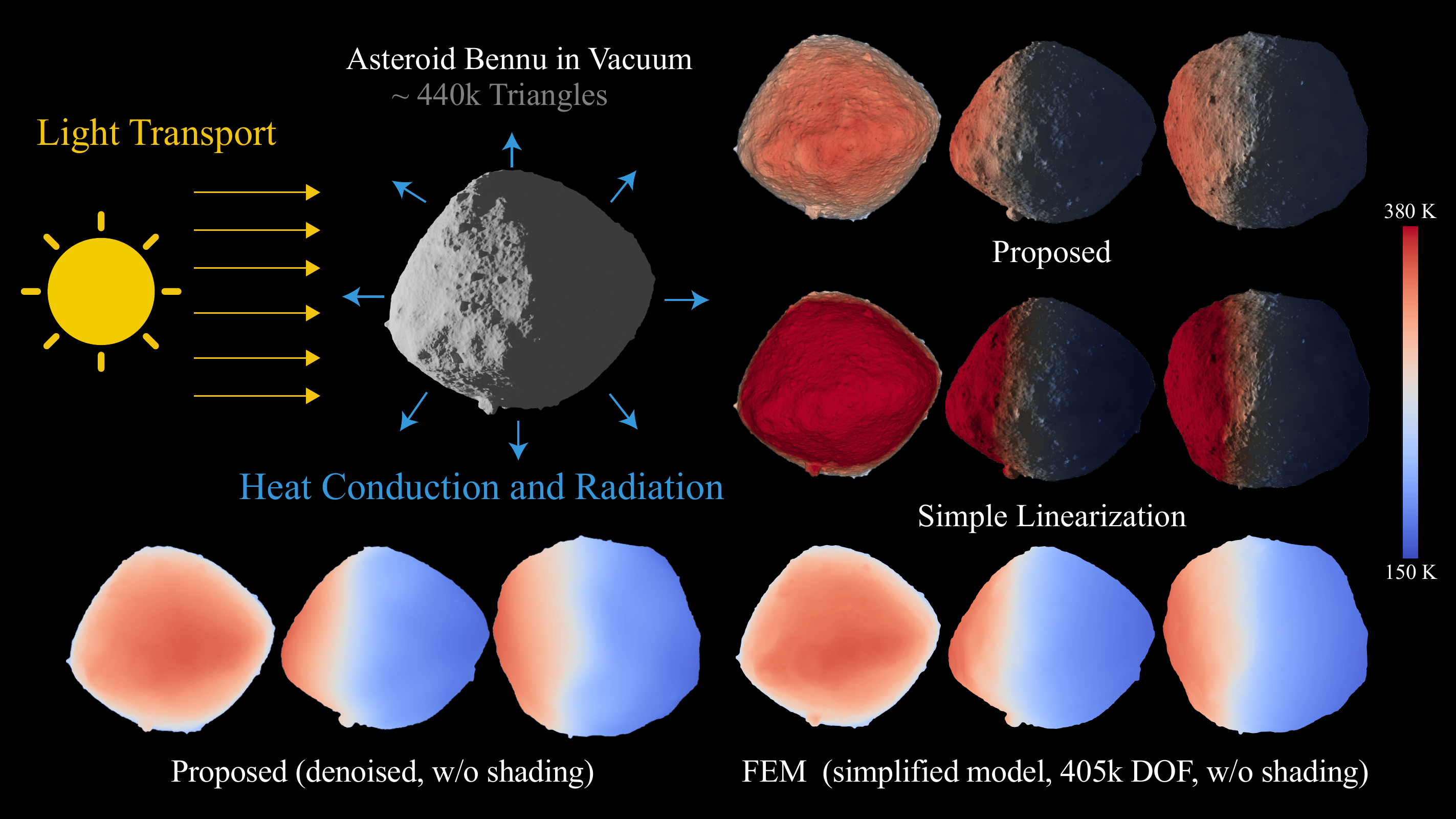}
    \caption{We propose a fixed-point Monte Carlo framework for solving PDEs with nonlinear boundary conditions. Shown here is an application to \changed{a} steady-state light--heat coupled \changed{problem} on an asteroid in vacuum. A naive linearization of the radiative boundary condition, combined with an imprecise initial temperature, produces \changed{systematically biased} surface temperatures. In contrast, the proposed Picard-style iteration robustly resolves the nonlinear coupling between radiative transfer and heat conduction, yielding physically consistent temperature distributions under complex self-shadowing and geometry. }
    \label{fig:coupled}
\end{teaserfigure}

%%
%% This command processes the author and affiliation and title
%% information and builds the first part of the formatted document.
\maketitle

\section{Introduction}

% Thermal analysis is a fundamental component of many engineering and scientific applications, including aerospace design, planetary science, and computer graphics. 
Recent advances in Monte Carlo PDE solvers \cite{miller2024walkin} have demonstrated that stochastic methods provide a robust and geometry-flexible alternative to classical mesh-based solvers, particularly for problems involving highly complex or irregular boundaries. Prior work has successfully applied these solvers to a variety of heat-related problems, including infrared rendering~\cite{bati2023coupling} and interactive shape crafting~\cite{de2023heat}.

Heat transfer is governed by three physical mechanisms: conduction, convection, and radiation. Heat conduction in homogeneous materials is described by the Laplace equation, while convective heat exchange with the environment is typically modeled through Robin boundary conditions. Existing Monte Carlo PDE solvers, such as Walk-on-Spheres and Walk-on-Boundary methods, can handle a wide range of linear boundary conditions, including Dirichlet \cite{sawhney2020monte}, Neumann \cite{sawhney2023walk}, and Robin conditions \cite{miller2024walkin, sugimoto2023practical}. As a result, conduction and convection phenomena are well supported within the current Monte Carlo PDE solver framework.

However, radiative heat transfer remains largely unaddressed in Monte Carlo PDE solvers. Unlike convection, radiation introduces a fundamentally nonlinear boundary condition of the form
\begin{equation}
    \frac{\partial T}{\partial \vec{n}} + \frac{\sigma \epsilon}{k_0} \left( T^4 - T_0^4 \right) = 0,
\end{equation}
where $T$ denotes the surface temperature, $T_0$ the ambient temperature, $\epsilon$ the emissivity, $k_0$ the thermal conductivity, and $\sigma$ the Stefan-Boltzmann constant. This nonlinear dependence on the unknown temperature breaks the linear structure that underlies classical Monte Carlo estimators, making direct recursive integration intractable.

In this work, we address this limitation by integrating a fixed-point iteration scheme into Monte Carlo PDE solvers, enabling the efficient and robust simulation of steady-state radiative heat transfer on complex geometries. Our approach preserves the geometric flexibility and mesh-free nature of Monte Carlo solvers while extending their applicability to nonlinear radiative boundary conditions. In addition, we observe that radiative problems often suffer from severe noise at boundary evaluation points, a case not covered by existing variance reduction techniques. To address this issue, we introduce a heteroscedastic regression-based denoising strategy that operates directly on boundary estimates and can be seamlessly embedded into the fixed-point iteration process.

% Our main contributions are as follows:
% \begin{itemize}
%     \item We propose a fixed-point iteration framework for solving nonlinear radiative boundary value problems within Monte Carlo PDE solvers, extending Monte Carlo PDE solution estimators to full heat transfer models.
%     \item We introduce a heteroscedastic regression-based denoising method that effectively handles noisy boundary evaluations and integrates naturally with the iterative solver.
%     \item We validate our approach on complex geometries and physically meaningful heat radiation scenarios, demonstrating competitive accuracy and robustness compared to \changed{the} vanilla linearization \changed{strategy} and finite element methods.
% \end{itemize}

\section{Related Work}

\subsection{Monte Carlo PDE Solvers}

Monte Carlo methods for solving partial differential equations originate from potential theory and probability theory, where solutions to elliptic boundary value problems admit probabilistic representations. A foundational result due to Kakutani~\cite{kakutani1944143} establishes that the solution to the Laplace equation with Dirichlet boundary conditions can be expressed as the expectation of the boundary values evaluated at the first-hitting point of a Brownian motion:
\begin{equation}
    u(x) = \mathbb{E}_x\left[ u(X_\tau) \right],
\end{equation}
\changed{where $X_t$ denotes Brownian motion starting at $x$, $\tau$ is the first time the process reaches the boundary $\partial \Omega$, \changedd{and} $X_{\tau}$ is the exit location\changedd{.}}

Direct simulation of Brownian motion via time discretization is computationally inefficient. \citet{muller1956some} introduced the Walk-on-Spheres (WoS) algorithm, which exploits the isotropy of Brownian motion to sample the \changed{exit location} distribution from maximal inscribed spheres. At each step, the random walk advances from the current position to a uniformly sampled point on the boundary of the largest sphere fully contained in the domain, resulting in rapid convergence without explicit time stepping.

Recent work has brought WoS-based solvers to the computer graphics and geometry processing communities. \citet{sawhney2020monte} introduced WoS for efficiently solving (screened) Poisson equations on complex geometries, along with spatial acceleration structures and gradient estimation techniques. Subsequent Walk-on-Stars (WoSt) algorithms generalized WoS to Neumann boundary conditions~\cite{sawhney2023walk} and to Robin boundary conditions~\cite{miller2024walkin}. In particular, \citet{miller2024walkin} employ the Brakhage-Werner trick together with an adaptive radius selection strategy to handle Robin boundary conditions within the WoSt framework. \citet{sugimoto2023practical} proposed an alternative approach based on Walk-on-Boundary, which handles a variety of boundary conditions but exhibits limited convergence in non-convex domains. \changed{\citet{huang2025geometric} extends the Robin version of walk-on-star algorithm from mesh setting to the closed implicit surfaces.}

Beyond direct solvers, a number of variants have been developed to extend the applicability of Monte Carlo PDE methods. 
Several works focus on making WoS solvers differentiable for inverse problems and optimization, including shape reconstruction and parameter estimation~\cite{miller2024differential, yu2024differential, yilmazer2024solving}.
\citet{yu2025robust} proposed a robust WoSt extension for the gradient estimation tasks.
These stochastic formulations have also been applied to geometric processing tasks such as cage-based deformation~\cite{de2024stochastic}. Extensions to more complex physical models include \changed{spatially varying coefficients}~\cite{sawhney2022grid}, Poisson Boolean media~\cite{miller2025solving}, and applications to fluid simulation~\cite{rioux2022monte, sugimoto2024velocity, jain2024neural}.

\changed{Most relevant to our work, \citet{bati2023coupling} proposed a coupled Monte Carlo solver for simulating infrared rendering, which models radiative heat transfer between surfaces, but resolves the radiative nonlinearity using a standard single-step linearization approach (Section~4.1, Equation 16 in their paper), corresponding to the simple linearization baseline in our evaluation. In contrast, our work focuses on enabling iterative nonlinear boundary updates within the Monte Carlo PDE framework to fully resolve the coupling. \citet[Chapter 7]{sabelfeld2016stochastic} described a stochastic fixed-point iteration method for nonlinear Poisson equations without practical numerical implementation and validation.}

\subsection{Variance Reduction}

A central challenge in Monte Carlo PDE solvers is variance reduction. Since solutions are computed via stochastic sampling and Monte Carlo integration, the resulting estimates can exhibit substantial noise, particularly in regions requiring many recursive evaluations. Consequently, a large body of work has focused on variance reduction strategies.

\citet{sawhney2020monte} studied classical variance reduction techniques including control variates and importance sampling in the context of WoS solvers.
Several following methods reduce variance by reusing information across nearby evaluation points. \citet{miller2023boundary} cache boundary values and normal derivatives to accelerate interior evaluations. \citet{bakbouk2023mean} exploit volumetric mean-value properties to spatially blend nearby samples, while \citet{qi2022bidirectional} draw inspiration from bidirectional path tracing to formulate a bidirectional WoS estimator. Neural caching approaches~\cite{li2023neural, li2024neural} replace explicit storage with learned predictors that estimate solution values. Other methods reconstruct local solution representations using harmonic expansions~\cite{zhou2025harmonic} or reuse samples from off-centered or overlapping spheres~\cite{czekanski2024walking, bao2025off}.

Despite these advances, existing variance reduction techniques primarily focus on solution estimates in the interior of the domain or leverage boundary information to accelerate interior queries. In contrast, variance reduction for \emph{on-boundary} solution estimates has received little attention. Boundary points do not admit enclosing interior spheres and therefore fall outside the scope of local sample reuse methods such as mean value caching, harmonic caching, or off-centered strategies. 

To address this gap, we introduce a heteroscedastic regression-based denoising method for boundary evaluations. By explicitly modeling spatially varying Monte Carlo uncertainty on the boundary, our approach reduces noise in radiative boundary estimates and integrates seamlessly with the fixed-point Monte Carlo framework proposed in this work.

\subsection{Heat Radiation Simulation}

Heat radiation plays a critical role in heat transfer modeling and has been extensively studied in both engineering and graphics applications. The radiative boundary condition introduces a nonlinear $T^4$ term arising from the Stefan--Boltzmann law, making the resulting systems highly nonlinear when discretized using \changed{finite difference (FD), finite element methods (FEM), or boundary element methods (BEM)}. Classical numerical solvers therefore rely on iterative schemes such as Picard iteration or Newton's method to resolve the nonlinearity~\cite[Appendix~2]{reddy2015introduction}.

In the context of FEM, Picard iteration is commonly adopted for its robustness and simplicity, and has been applied to radiative heat transfer problems in both mesh-based and meshless settings. For example, \citet{zhou2021heat} employ Picard iteration within a meshless weighted least-squares \changed{(MWLS)} framework. In computer graphics, \citet{freude2023precomputed} propose a radiative heat transfer solver based on photon tracing and precomputation, while \citet{freude2025inverse} extend this formulation to differentiable inverse problems. \changed{\citet{bialecki1981boundary} and \citet{blobner1999transient} explore BEM formulations for steady-state and transient radiative heat transfer.}

Despite these advances, existing approaches rely on spatial discretization, precomputation, or mesh-dependent representations. \changed{MWLS~\cite{zhou2021heat} is meshless, but the solution is represented as a linear combination of predefined basis functions, which constrains expressivity and requires basis design choices. In contrast, our boundary proxy uses direct Monte Carlo samples with local MLS reconstruction, avoiding both volumetric meshing and global basis construction while retaining geometric flexibility.} In contrast, the treatment of nonlinear radiative boundary conditions within Monte Carlo PDE solvers has received little investigation. Our work bridges this gap by integrating Picard-style fixed-point iteration directly into Monte Carlo estimators, enabling robust simulation of radiative heat transfer on complex geometries without discretization.

\subsection{Monte Carlo Estimators for Nonlinear Systems}

Monte Carlo estimation for nonlinear systems has attracted significant interest across multiple research communities, due to its fundamental role in nonlinear integral equations.

In the rendering community, nonlinear Monte Carlo estimators arise naturally in stylized rendering, where the shading is governed by nonlinear rendering equations~\cite{doi2021global, west2024stylized}. 
\changed{\citet{misso2022unbiased} proposed a Taylor-series debiasing method for nonlinear Monte Carlo estimators, which can be applied to a wide range of nonlinear rendering problems.}
Recently, \citet{tong2025practical} proposed a practical framework based on nonlinear path filtering and neural radiance caching to stabilize such estimators. 
Similarly, \citet{tinits2025nonlinear} introduced a nonlinear Noise2Noise formulation for training denoisers directly from noisy observations, addressing nonlinear statistical dependencies between inputs and targets.

\changed{Nonlinear Monte Carlo estimation also appears in volumetric rendering~\cite{novak2018monte}. In the radiative transfer equation, the transmittance along a ray is expressed as an exponential of an integral, which can be estimated using Monte Carlo methods. Null scattering techniques like Delta tracking~\cite{raab2006unbiased}, \changedd{the} next-flight estimator~\cite{kutz2017spectral}, and variants~\cite{georgiev2019integral} have been developed to efficiently handle the nonlinear transmittance term.}

In the machine learning community, estimating expectations of nonlinear recursive formulations has been studied in the context of nested Monte Carlo estimators~\cite{rainforth2018nesting}. From an applied mathematics perspective, multilevel Picard methods~\cite{beck2020nonlinear, hutzenthaler2021multilevel} provide a principled framework for solving nonlinear partial differential equations via stochastic fixed-point iteration. 

Despite these advances, directly applying existing nonlinear Monte Carlo estimators to Monte Carlo PDE solvers remains challenging. In radiative boundary value problems, the estimator is recursively branched at each reflection event, leading to rapidly growing variance and computational cost. Our work addresses this gap by introducing a fixed-point Monte Carlo formulation that is specifically designed for nonlinear radiative boundary conditions and remains stable under recursive boundary interactions.

\begin{table}
\centering
\caption{Notation used throughout the paper.}
\label{tab:notation}
\begin{tabular}{p{0.2\linewidth}p{0.75\linewidth}}
\toprule
\textbf{Symbol} & \textbf{Description} \\
\midrule
\multicolumn{2}{l}{\textbf{Domain and Boundary}} \\
$\Omega$ & Bounded domain in $\mathbb{R}^3$ \\
$\partial\Omega_D$ & Dirichlet boundary region \\
$\partial\Omega_N$ & Neumann boundary region \\
$\partial\Omega_R$ & Robin/radiative boundary region \\
$\vec{n}$ & Outward unit normal vector on boundary \\
\midrule
\multicolumn{2}{l}{\textbf{PDE and Physics}} \\
$T(x)$ or $u(x)$ & Temperature field (solution to the PDE) \\
$k_0$ & Thermal conductivity \\
$\sigma$ & Stefan-Boltzmann constant ($5.67\times10^{-8}$ W/m$^2$K$^4$) \\
$\varepsilon$ & Surface emissivity \\
$T_0, T_f, T_s$ & Ambient temperatures in various contexts \\
$\mu(x)$ & Robin coefficient \\
\midrule
\multicolumn{2}{l}{\textbf{Monte Carlo Solver}} \\
$\text{St}(x,R)$ & Star-shaped region used in Walk-on-Stars \\
$B(x,R)$ & Ball of radius $R$ centered at $x$ \\
$P^\Omega(x,z)$ & Poisson kernel \changed{of region $\Omega$} \\
$G^\Omega(x,z)$ & Green's function \changed{of region $\Omega$} \\
$\rho_\mu(x,z)$ & Reflectance coefficient \\
\bottomrule
\end{tabular}
\end{table}

\section{Background}

Before introducing our method, we \changed{briefly review} the physical model of heat transfer and Monte Carlo PDE solvers, especially the Walk-on-Spheres family. The key symbols used throughout the paper are summarized in Table~\ref{tab:notation}.

\subsection{Physical Model of Heat Transfer}

We consider steady-state heat transfer in a bounded domain $\Omega \subset \mathbb{R}^3$ with mixed boundary conditions, including radiative heat exchange. The temperature field $T(x)$ satisfies the following boundary value problem~\cite{modest2021radiative}:
\begin{subequations}\label{equ:heat_bvp}
\begin{align}
    &k_0 \Delta T(x) = 0, && x \in \Omega, \label{equ:heat_bvp:a} \\[1ex]
    &T(x) = T_0(x), && x \in \partial\Omega_{\changed{D}}, \label{equ:heat_bvp:b} \\[1ex]
    &\frac{\partial T}{\partial \vec n}(x) = \frac{q_0(x)}{k_0}, && x \in \partial\Omega_{\changed{N}}, \label{equ:heat_bvp:c} \\[1ex]
    &\frac{\partial T}{\partial \vec n}(x)
    + \frac{h}{k_0}\bigl[T(x)-T_f(x)\bigr] \nonumber \\
    &\phantom{\frac{\partial T}{\partial \vec n}(x)}
    + \frac{\sigma\epsilon}{k_0}\bigl[T(x)^4 - T_s(x)^4\bigr] = 0,
    && x \in \partial\Omega_{\changed{R}}. \label{equ:heat_bvp:d}
\end{align}
\end{subequations}

Equation~\eqref{equ:heat_bvp:a} models heat conduction inside the domain, governed by the Laplace equation with constant thermal conductivity $k_0$. The boundary $\partial\Omega_{\changed{D}}$ is subject to prescribed temperature values (Dirichlet condition), while $\partial\Omega_{\changed{N}}$ enforces a prescribed heat flux (pure Neumann condition), where $q_0(x)$ denotes the outward heat flux density.

The boundary $\partial\Omega_{\changed{R}}$ models combined heat convection and radiation. The linear term
\(
\frac{h}{k_0}[T(x)-T_f(x)]
\)
represents convective heat exchange with a surrounding fluid of temperature $T_f(x)$, where $h$ is the convection coefficient. The nonlinear term
\(
\frac{\sigma\epsilon}{k_0}[T(x)^4 - T_s(x)^4]
\)
models radiative heat transfer according to the Stefan--Boltzmann law, with emissivity $\epsilon$, Stefan--Boltzmann constant $\sigma$, and ambient radiative temperature $T_s(x)$.

In many scenarios, convection dominates and the radiative term can be neglected, reducing~\eqref{equ:heat_bvp:d} to a Robin boundary condition:
\begin{equation}
    \frac{\partial T}{\partial \vec{n}} + \frac{h}{k_0}\bigl(T(x) - T_f(x)\bigr) = 0,
    \qquad x \in \partial\Omega_{\changed{R}}.
\end{equation}
However, in vacuum or near-vacuum environments, such as spacecraft, satellites, or asteroids, radiation becomes the primary heat transfer mechanism, making the nonlinear boundary condition unavoidable and numerically challenging.

\subsection{Monte Carlo PDE Solvers}

Monte Carlo PDE solvers provide a mesh-free, geometry-agnostic framework for solving elliptic boundary value problems. We briefly review the Walk-on-Spheres method and its extensions to different boundary conditions from the perspective of boundary integral equations.

Consider a general Poisson problem with mixed boundary conditions:
\begin{equation}
\label{equ:bvp_robin}
\left\{
\begin{aligned}
    u(x) &= g(x), && x \in \partial \Omega_D, \\
    \frac{\partial u}{\partial \vec{n}}(x) + \mu(x)\, u(x) &= h(x), && x \in \partial \Omega_R, \\
    \Delta u(x) &= -f(x), && x \in \Omega.
\end{aligned}
\right.
\end{equation}

\paragraph{Boundary Integral Equation.}
Assuming sufficient regularity, the solution admits a boundary integral equation. For any region $A \subset \Omega$ and auxiliary domain $C \subset \mathbb{R}^n$, we have
\begin{equation}
\label{equ:bie}    
    \begin{aligned}
        \alpha(x)\, u(x)
        &=
        \int_{\partial A} \Bigl[
            P^C(x,z)\, u(z)
            - G^C(x,z)\, \frac{\partial u}{\partial \vec{n}}(z)
            \Bigr] \,\mathrm{d}S_z \\
            &+
            \int_A G^C(x,y)\, f(y)\,\mathrm{d}y,
        \end{aligned}
\end{equation}
where $\alpha(x)=1$ for $x\in A$ and $\alpha(x)={1}/{2}$ for $x\in\partial A$. When $C$ is chosen as a ball, the Poisson kernel $P^C$ and Green's function $G^C$ admit closed-form expressions. \changed{It is also worth noting} that $G^{B(x, R)}(x, y)$ vanishes when $y \in \partial B(x, R)$.

\subsubsection{Dirichlet Boundary: Walk-on-Spheres}

For Dirichlet problems, the Walk-on-Spheres algorithm~\cite{muller1956some, sawhney2020monte} chooses $A=C=B(x,R)$, where $R=d(x,\partial\Omega)$ \changed{is the distance between $x$ and $\partial\Omega$}. Since $G^C$ vanishes on $\partial C$, equation~\eqref{equ:bie} reduces to
\begin{equation}
    u(x)
    =
    \int_{\partial B(x,R)} P^B(x,z)\, u(z)\,\mathrm{d}S_z
    +
    \int_{B(x,R)} G^B(x,y)\, f(y)\,\mathrm{d}y.
\end{equation}
This formulation naturally yields a recursive Monte Carlo estimator by sampling points on the sphere boundary and within the ball. The walk terminates when the distance to $\partial\Omega$ falls below a prescribed threshold.

\subsubsection{Neumann Boundary: \changed{Walk-on-Stars}}

To handle Neumann boundary conditions, \citet{sawhney2023walk} proposed the Walk-on-Stars algorithm. Instead of a full ball, a star-shaped region
\(
\changed{\mathrm{St}}(x,R) = \Omega \cap B(x,R)
\)
is constructed, where $R$ is chosen as the closest silhouette distance to the Neumann boundary. We denote $\mathrm{St}_N = \partial\Omega \cap St$ and $\mathrm{St}_B = \partial St \setminus \mathrm{St}_N$, \changed{representing the \changedd{boundaries lying} on the sphere} $\partial B$ \changed{and on the domain boundary} $\partial\Omega$. Since $G^B$ vanishes on $\partial B(x,R)$, the corresponding boundary integral equation becomes
\begin{equation}
    \begin{aligned}
        \label{equ:bie_neumann}
        \alpha(x) u(x)
        &=
        \int_{\partial \mathrm{St}} P^B(x,z)\, u(z)\,\mathrm{d}S_z
        -
        \int_{\partial \mathrm{St}_N} G^B(x,z)\, \frac{\partial u}{\partial \vec{n}}(z)\,\mathrm{d}S_z \\
        &\quad +
        \int_{\mathrm{St}} G^B(x,y)\, f(y)\,\mathrm{d}y,
    \end{aligned}
\end{equation}
where the normal derivative $\sfrac{\partial u}{\partial \vec{n}}$ on $\partial \mathrm{St}_N$ is provided by the Neumann boundary condition. \citet{sawhney2023walk} described sampling methods for each integral term, yielding an unbiased Monte Carlo estimator for Neumann problems.

\subsubsection{Robin Boundary: \changed{Walk-on-Stars} with Reflectance Term}

For Robin boundary conditions, \citet{miller2024walkin} apply the Brakhage--Werner trick to eliminate the normal derivative. Substituting
$
\sfrac{\partial u}{\partial \vec{n}} + \mu u = h
$ into~\eqref{equ:bie_neumann} yields
\begin{equation}
    \begin{aligned}
        \alpha(x) u(x)
        &=
        \int_{\partial \mathrm{St}(x,R)} \rho_\mu(x,z)\, P^B(x,z)\, u(z)\,\mathrm{d}S_z \\
& +
\int_{\partial \mathrm{St}_R(x,R)} G^B(x,z)\, h(z)\,\mathrm{d}S_z \\
&+
\int_{\mathrm{St}(x,R)} G^B(x,y)\, f(y)\,\mathrm{d}y,
\end{aligned}
\end{equation}
where $\partial \mathrm{St}_R = \partial\Omega \cap \mathrm{St}$ and $\partial \mathrm{St}_B = \partial \mathrm{St} \setminus \partial \mathrm{St}_R$.

The reflectance coefficient is defined as
\begin{equation}
\label{equ:rho_nonlinear}
\rho_\mu(x,z) =
\begin{cases}
1 - \dfrac{G^B(x,z)}{P^B(x,z)}\, \mu(z),
& z \in \partial \mathrm{St}_R, \\[6pt]
1, & z \in \partial \mathrm{St}_B.
\end{cases}
\end{equation}
To ensure estimator stability, the star radius $R$ is adaptively reduced such that
\(
\rho_\mu(x,z) \in [0,1]
\)
for all $z \in \partial \mathrm{St}(x,R)$.
\citet{miller2024walkin} implemented this constraint on triangle meshes, embedding the value of $\rho$ into the spatial acceleration data structure. To build the spatial structure ensuring $\rho_\mu(x,z) \in [0,1]$ for all $z \in \partial \mathrm{St}(x,R)$, users should provide the maximum and minimum values of $\mu$ in each triangle.

\begin{figure*}[ht]
    \centering
    \includegraphics[width=\linewidth]{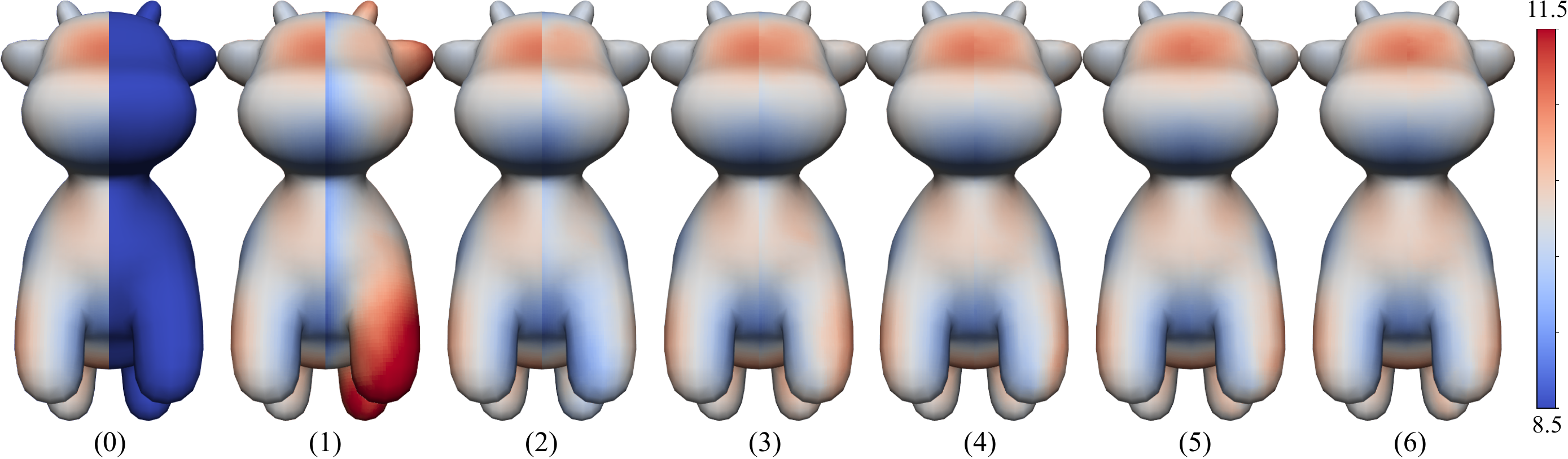}
    \caption{Visualization of the fixed-point iteration process for the radiative boundary condition. The left half of each image shows the solution on the Dirichlet boundary, while the right half shows the corresponding solution on the radiative boundary. The ground-truth solution is symmetric across the two halves. Panel (0) shows the initial proxy, and panels (1)--(6) show the boundary solution after the $n$-th fixed-point iteration, illustrating progressive convergence toward the ground truth.}
    \label{fig:iteration}
\end{figure*}

\section{Method}
\label{sec:method}

We propose an iterative Monte Carlo algorithm to solve heat transfer problems with radiative boundary conditions on complex geometries. Our method addresses the core challenge posed by the nonlinear $T^4$ term, which prevents direct application of classic Monte Carlo PDE solvers. By integrating the fixed-point iteration framework with the Walk-on-Stars (WoSt) estimator, we linearize the boundary at each step and update the solution until \changed{it reaches} the fixed point, which is the solution of the original boundary value problem. Algorithm~\ref{alg:main} summarizes the overall procedure.

For notational convenience, we consider the following boundary value problem in this section:
\begin{equation}
\label{equ:bvp_radiative}
\left\{
\begin{aligned}
    u(x) &= g(x), && x \in \partial \Omega_D, \\
    \frac{\partial u}{\partial \vec{n}}(x) + \mu(x)\, u^4(x) &= h(x), && x \in \partial \Omega_R, \\
    \Delta u(x) &= -f(x), && x \in \Omega.
\end{aligned}
\right.
\end{equation}
where $\partial \Omega_D$ is the Dirichlet boundary and $\partial \Omega_R$ is the radiative boundary. The coefficient $\mu(x)$ and \changed{solution} $u(x)$ are both non-negative. Although we focus on pure radiative boundaries for clarity, our method extends naturally to mixed convection--radiation conditions by incorporating the linear convection term into the Robin coefficient $\mu(x)$. Extensions to spatially varying conductivity and screened Poisson equations are also straightforward~\cite{sawhney2022grid,miller2024walkin}.

\begin{algorithm}
    \caption{Solving Radiative BVP with Picard Iteration}
    \label{alg:main}
    \begin{algorithmic}[1]
    \State \textbf{Input:} Geometry specified by boundary $\partial \Omega$, \changed{split into the} absorbing boundary $\partial \Omega_D$ and radiative boundary $\partial \Omega_R$. Number of iterations $N$. Relaxation coefficient $\alpha$.
    \State \textbf{Output:} Solution on radiative boundary $\partial \Omega_R$.
    
    \State $u_0 \gets InitialGuess()$
    
    \For{$n = 1$ to $N$}
        \State $\mathrm{BuildRobinAccelerationStructure}(u_{n - 1})$
        \State $u_n \gets ~$ Solve the BVP with proxy solution $u_{n - 1}$ on $\partial \Omega_R$.
        \State Relaxation: $u_n \gets \alpha u_n + (1 - \alpha) u_{n - 1}$
    \EndFor
    
    \State (Optional) $u_n \gets RegressionDenoiser(\partial \Omega_{\changedd{R}}, u_n)$
    \State \textbf{Return} $u_N$
    \end{algorithmic}
\end{algorithm}

\subsection{Fixed-Point Iteration Framework}
\label{sec:iteration}

The primary difficulty in handling radiative boundary conditions lies in the nonlinear term $u^4$, which prevents the direct application of classical one-sample Monte Carlo estimation. To address this issue, we adopt a Picard-style fixed-point strategy in which the nonlinearity is handled through successive linearized boundary value problems.

A typical approach \changed{for handling} the nonlinear term is linearization through a proxy solution. Let $u_0$ denote an initial proxy solution on the radiative boundary $\partial\Omega_R$. The nonlinear term $u^4$ is linearized by freezing the cubic coefficient. Specifically, the radiative contribution $u^4$ is approximated as
\begin{equation}
u^4 \approx u_0^3 u,
\end{equation}
where $u_0$ is treated as known. Substituting this approximation into the radiative boundary condition yields the following Robin-type condition:
\begin{equation}
\label{equ:linearized}
\frac{\partial u}{\partial \vec{n}} (x)
+ \mu(x) u_0^3(x) u(x)
= h(x).
\end{equation}
Under this approximation, the original nonlinear radiative boundary condition is reduced to a linear Robin boundary condition, which can be handled using the modified Walk-on-Stars estimator~\cite{miller2024walkin}.

This coefficient-freezing strategy works well when the proxy solution $u_0$ is close to the true solution \changed{and} solution variations along the boundary are moderate.
However, if the initial proxy is inaccurate, a single linearization step may introduce systematic bias. 
For example, if the proxy solution is underestimated, the resulting solution is greater than the correct solution, which is formally described by the following comparison theorem.

\begin{lemma}[Comparison Theorem]
\label{thm:comparison}
Let $u$ solve the nonlinear radiative boundary value problem \eqref{equ:bvp_radiative}, and let $u_0$ solve the linearized problem \eqref{equ:linearized} with proxy $p(x) \geq 0$.  
If
\begin{equation}
p(x) \le u(x) \quad \text{for all } x \in \partial\Omega_R,
\end{equation}
then
\begin{equation}
u_0(x) \ge u(x) \quad \text{for all } x \in \Omega.
\end{equation}
\end{lemma}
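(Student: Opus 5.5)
Our approach is to apply the maximum principle, together with the Hopf boundary lemma, to the difference $w = u_0 - u$.

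\textbf{Step 1: derive the problem satisfied by $w$.} The functions $u$ and $u_0$ solve problems with the same source $f$ and the same Dirichlet data $g$. Hence
\[
\Delta w = 0 \text{ in } \Omega, \qquad w = 0 \text{ on } \partial\Omega_D .
\]
On $\partial\Omega_R$, subtract the nonlinear condition $\partial_n u + \mu u^4 = h$ from the linearized condition $\partial_n u_0 + \mu p^3 u_0 = h$. This gives
\[
\frac{\partial w}{\partial \vec n} + \mu p^3 u_0 - \mu u^4 = 0 .
\]
Now rewrite $\mu p^3 u_0 - \mu u^4 = \mu p^3 w + \mu (p^3 - u^3) u$. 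The condition then becomes the Robin-type inequality
\[
\frac{\partial w}{\partial \vec n} + \mu p^3 w = \mu (u^3 - p^3)\, u \ \ge\ 0 ,
\]
which uses the hypotheses $p \le u$, $u \ge 0$, $\mu \ge 0$, and $p \ge 0$.

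\textbf{Step 2: show $w \ge 0$ by contradiction.} Suppose $\min_{\overline\Omega} w < 0$. Since $w$ is harmonic, the weak maximum principle places the minimum on $\partial\Omega$. It cannot lie on $\partial\Omega_D$, because $w = 0$ there. So the negative minimum is attained at some $x^* \in \partial\Omega_R$.

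At $x^*$, the Hopf lemma (if $w$ is nonconstant) gives $\partial_n w(x^*) < 0$. Combined with $\mu p^3 w(x^*) \le 0$, this makes the left-hand side of the inequality strictly negative, contradicting the fact that it is $\ge 0$. If instead $w$ is constant, it equals some $c < 0$, and $c$ must vanish on $\partial\Omega_D$ provided $\partial\Omega_D \neq \emptyset$, again a contradiction. Therefore $w \ge 0$, that is, $u_0 \ge u$ in $\Omega$.

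\textbf{Main obstacle.} The delicate points are the degenerate cases.

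The first is the possibility $\partial\Omega_D = \emptyset$ with $w \equiv c < 0$. Then the Robin inequality reads $\mu p^3 c \ge \mu(u^3 - p^3)u \ge 0$, which forces $\mu p^3 = 0$ wherever $\mu > 0$. In this case we would assume $\partial\Omega_D$ is nonempty, or that $\mu p^3 > 0$ somewhere, which is needed anyway for well-posedness of the linearized problem.

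The second is regularity: the Hopf lemma requires an interior sphere condition at $x^*$, and $w$ must be $C^1$ up to the boundary. We would assume these, consistent with the ``sufficient regularity'' invoked for the boundary integral equation \eqref{equ:bie}.

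An alternative that avoids Hopf is to multiply by $w^- = \max(-w, 0)$ and integrate by parts:
\[
\int_\Omega |\nabla w^-|^2 = -\int_{\partial\Omega_R} \mu p^3 (w^-)^2 - \int_{\partial\Omega_R} \mu (u^3 - p^3)\,u\, w^- \le 0 .
\]
Hence $w^-$ is constant. It vanishes on $\partial\Omega_D$, so $w^- = 0$, which gives the same conclusion under weaker regularity.
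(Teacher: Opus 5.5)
Your proposal is correct and follows essentially the same route as the paper: form the difference $u_0-u$, derive $\partial_n w+\mu p^3 w=\mu u(u^3-p^3)\ge 0$ on $\partial\Omega_R$, and rule out a negative boundary minimum using the maximum principle together with a strictly negative outward normal derivative at that point. Your separate treatment of the constant case, where the paper's unconditional claim $\partial_n r(x_0)<0$ fails and which genuinely needs $\partial\Omega_D\neq\emptyset$ or $\mu p^3\not\equiv 0$, is a sound refinement the paper omits, as is your alternative energy argument with $w^-$.
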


\changed{This comparison relation follows the standard maximum principle argument~\cite{evans2022partial}, and we include a short proof in Appendix~\ref{sec:proof1}.}

\begin{figure}[h]
    \centering
    \includegraphics[width=0.8\linewidth]{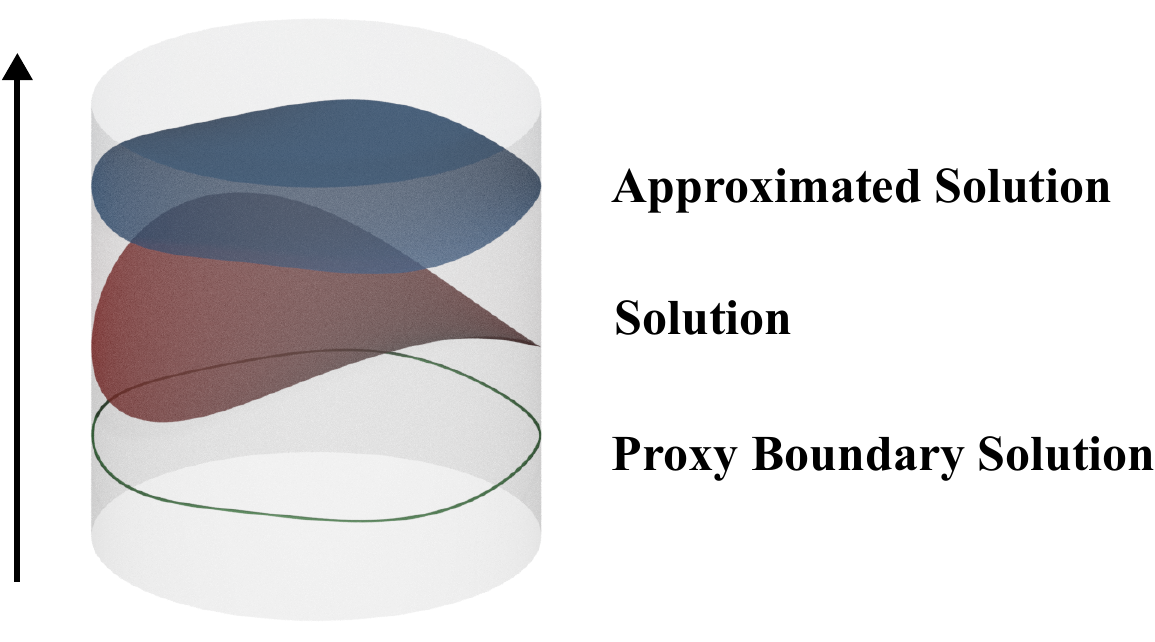}
    \caption{Illustrative figure of the comparison theorem \ref{thm:comparison}. The green curve is the image of an underestimated radiative boundary value. Blue surface is the image of the solution of the linearized boundary value problem, which is larger than the correct solution (shown in red).}
    \label{fig:comparison_thm}
\end{figure}

Figure \ref{fig:comparison_thm} intuitively presents this property, and we provide the proof in Appendix~\ref{sec:proof1}. An overestimated initial guess will lead to an underestimated solution. To mitigate this issue, we embed the linearization into a fixed-point iteration framework. We formalize this process by defining an operator $\mathcal{S}(u_p, f, g, h)$ that maps a proxy solution $u_p$ on $\partial\Omega_R$ to the resulting solution on $\partial\Omega_R$, obtained by solving the corresponding linearized boundary value problem with fixed coefficients. The solution of the original nonlinear problem is a fixed point of this operator:
\begin{equation}
u = \mathcal{S}(u, f, g, h).
\end{equation}
Accordingly, we apply the Picard iteration
\begin{equation}
u_{n+1} = \mathcal{S}(u_n, f, g, h),
\end{equation}
using the solution from the previous iteration as the proxy for coefficient evaluation.

Substituting this formulation into the boundary integral equation associated with a star-shaped domain $St(x, R)$ yields
\begin{equation}
\begin{aligned}
\alpha(x) u_{n+1}(x)
&=
\int_{\partial \mathrm{St}(x,R)} \rho_\mu(x,z) P^B(x,z) u_{n+1}(z)\,\mathrm{d}z \\
&+ \int_{\partial \mathrm{St}_R(x,R)} G^B(x,z) h(z)\,\mathrm{d}z \\
& + \int_{\mathrm{St}(x,R)} G^B(x,y) f(y)\,\mathrm{d}y,
\end{aligned}
\end{equation}
where the reflectance term $\rho_\mu$ is given by
\begin{equation}
\label{equ:rho}
\rho_\mu(x,z) =
\begin{cases}
1 - \dfrac{G^B(x,z)}{P^B(x,z)} \mu(z) u_n^3(z),
& z \in \partial \mathrm{St}_R, \\
1, & z \in \partial \mathrm{St}_B.
\end{cases}
\end{equation}

A direct Monte Carlo evaluation of $u_{n+1}$ is challenging because $\rho_\mu$ depends on the unknown solution through $u_n$. Naively estimating this dependence would require recursively sampling multiple coupled quantities, leading to branching estimators whose computational cost grows exponentially with recursion depth. To avoid this issue, we adopt a caching-based strategy: at each iteration, the solution on the radiative boundary is estimated once, stored as a proxy, and reused to evaluate $\rho_\mu$ in subsequent estimations. This eliminates branching and keeps the overall computational complexity linear in the number of Monte Carlo samples.

\begin{figure}[h]
    \centering
    \includegraphics[width=\linewidth]{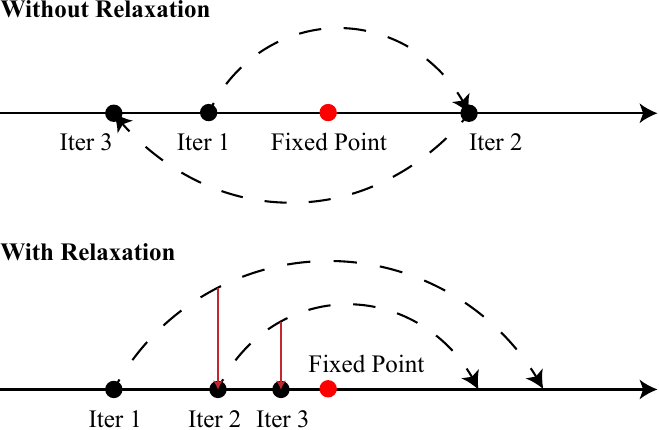}
    \caption{Illustration of relaxation in fixed-point iterations. Without relaxation, successive iterates may oscillate and fail to converge. Introducing a relaxation factor damps these oscillations and significantly improves convergence stability.}
    \label{fig:relaxation}
\end{figure}

\paragraph{Bias, convergence and stability.}

The Picard iteration may exhibit oscillatory behavior or fail to converge due to the strong nonlinearity of the boundary condition, as the induced fixed-point map is not guaranteed to be contractive. To improve stability, we apply under-relaxation at each iteration:
\begin{equation}
    u_n \gets \alpha u_n + (1 - \alpha) u_{n-1}, \quad \alpha \in (0,1).
\end{equation}
Empirically, relaxation significantly improves convergence and suppresses oscillations. Figure~\ref{fig:relaxation} illustrates the stabilizing effect of relaxation, while Figure~\ref{fig:iteration} shows the iterative convergence behavior under the configuration described in Section~\ref{sec:synthetic}. Starting from an underestimated initial solution, the method progressively converges to the ground truth.
\changed{Nevertheless, this behavior is empirical: the fixed-point iteration is not theoretically guaranteed to converge globally, and oscillatory or divergent behavior can occur for unfavorable parameter settings (see Section~\ref{sec:failure_cases} and Figure~\ref{fig:failure}).}

Our method evaluates the nonlinear radiative term using a cached proxy solution on the boundary. Since radiative emission depends on a nonlinear function, in general
\begin{equation}
\mathbb{E}[u^3] \neq \mathbb{E}[u]^3,
\end{equation}
and therefore directly applying the nonlinear operator to the cached mean introduces bias into the Picard iteration process.

In practice, this bias is relatively small compared \changed{\changedd{to} the systematic bias caused} by \changedd{an} \changed{improper} proxy solution\changedd{,} and \changedd{it} can be reduced by increasing the number of samples. \changed{Empirically, we observe stable convergence and accurate solutions across tested scenarios, while still acknowledging the failure modes discussed in Section~\ref{sec:failure_cases}.}

\subsection{Boundary Function Representation}
\label{sec:boundary_representation}

Within the fixed-point iteration framework, each iteration requires evaluating and updating an intermediate solution field $u_n$ defined on the radiative boundary $\partial\Omega_R$. This necessitates a flexible boundary function representation that is compatible with Monte Carlo estimation and does not rely on mesh-dependent discretization.

A common choice for surface function representation is a per-vertex or per-face discretization. However, both approaches fundamentally tie the resolution of the function space to the underlying mesh. In particular, when the mesh contains large primitives, high-frequency variations induced by radiative effects cannot be adequately captured without mesh refinement or \changed{remeshing} (Figure \ref{fig:wireframe}). Introducing adaptive surface refinement solely for boundary function representation contradicts the processing-free design principle of Monte Carlo PDE solvers, which aims to avoid geometry-dependent preprocessing.

\begin{figure}[h]
    \centering
    \includegraphics[width=\linewidth]{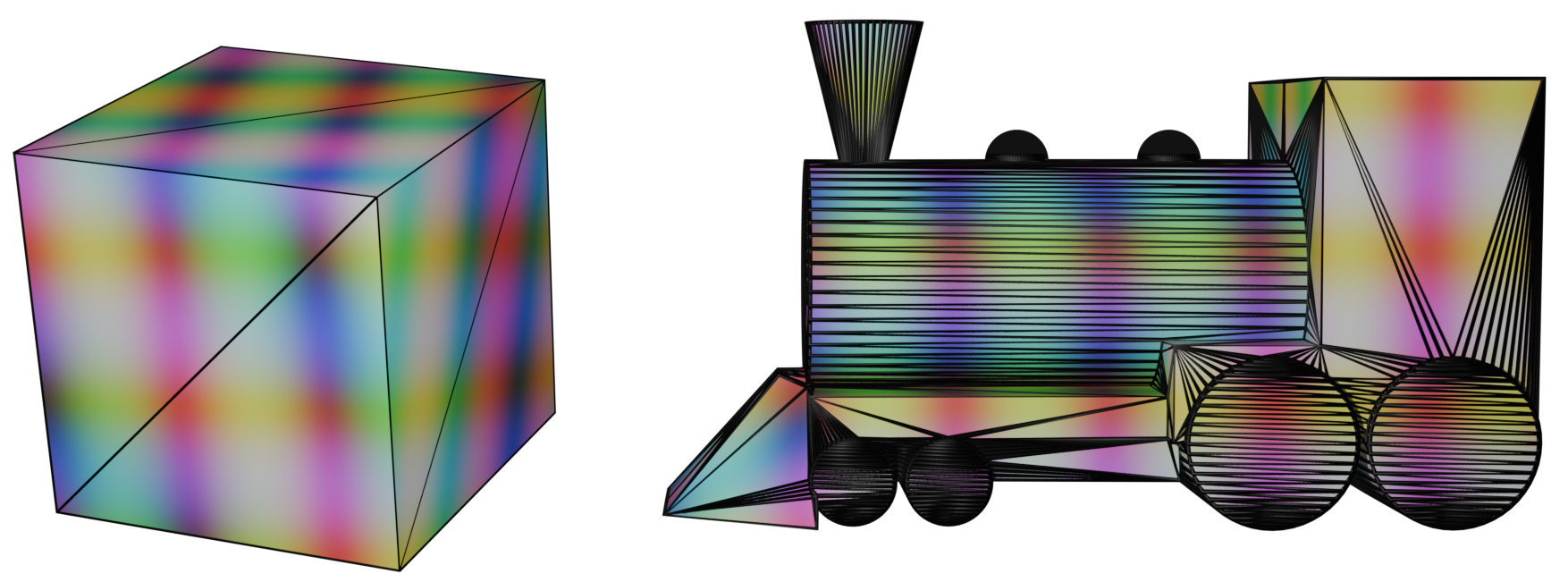}
    \caption{Failure cases of per-vertex or per-face function representations. Even for simple geometry such as a cube (a), each face is represented by only two triangles, which is insufficient to resolve spatially varying boundary quantities. This limitation becomes more severe for complex meshes with nonuniform triangulation, such as the train model (b).}
    \label{fig:wireframe}
\end{figure}

To decouple boundary function fidelity from mesh resolution, we adopt a point-cloud-based representation. At each iteration, a set of $N$ points $\{P_i\}_{i=1}^N$ is uniformly sampled from the radiative boundary $\partial\Omega_R$. For each sample point $P_i$, the value $u_n(P_i)$ is estimated independently using the Walk-on-Stars estimator applied to the current linearized boundary value problem.

To evaluate the proxy solution at arbitrary boundary locations, we reconstruct a continuous boundary function from these scattered samples using a moving least squares (MLS) approximation. Given an evaluation point $x \in \partial\Omega_R$, the proxy is locally approximated by a linear model
\begin{equation}
u_n(x) \approx a + b^\top (x - x_0),
\end{equation}
where the coefficients $a \in \mathbb{R}$ and $b \in \mathbb{R}^3$ are determined by minimizing the weighted least-squares objective
\begin{equation}
E(a,b) = \sum_{i} w(\|x - P_i\|)
\left(a + b^\top (P_i - x) - u_n(P_i)\right)^2.
\end{equation}
Only sample points within a fixed-radius neighborhood of $x$ are considered. The weight function is chosen as
\begin{equation}
w(r) = \exp\!\left(-\frac{r^2}{h^2}\right),
\end{equation}
with fixed bandwidth parameters used throughout all experiments. \changed{Pseudocode is} listed in Appendix \ref{sec:appendix_mls}. \changed{For efficient neighborhood queries, boundary samples are indexed with a KD-tree rebuilt each iteration; in practice, this step is inexpensive relative to Monte Carlo sampling and is not a runtime bottleneck.}

\paragraph{Updating Robin coefficients.}
After each iteration, the updated proxy solution $T_n$ induces new Robin coefficients in the linearized boundary condition. For the Walk-on-Stars estimator, it is sufficient to provide conservative lower and upper bounds on the Robin coefficient over each boundary primitive. For each triangle on $\partial\Omega_R$, we collect all proxy samples associated with that triangle, as well as the vertex values, and compute corresponding bounds on the coefficient
\begin{equation}
\mu(x) u_n^3(x).
\end{equation}
The BVH in the Walk-on-Stars~\cite{miller2024walkin} is re-initialized \changed{using} the minimum and the maximum of these coefficients. Due to the MLS approximation, the evaluated \changed{coefficient} value in the triangle may be larger than the estimated bound. \changed{In all experiments, we expand each bound by a fixed relative margin to ensure feasibility of the WoSt reflectance constraint.} \changedd{Thus, the} minima and maxima are further expanded by a margin to ensure the stability of the estimator.

\subsection{On-Boundary Denoising via Heteroscedastic Regression}
\label{sec:denoising}

Monte Carlo PDE solvers inherently produce stochastic noise whose variance depends on geometry, boundary conditions, and estimator configuration. Existing variance reduction techniques primarily focus on reducing noise in interior evaluations by exploiting spatial correlations. In contrast, our method requires repeatedly estimating the solution trace on the boundary, where variance reduction has received comparatively little attention.

To stabilize the fixed-point iteration and reduce boundary noise, we introduce an on-boundary denoising strategy based on heteroscedastic regression. For each evaluation point $x \in \partial\Omega_R$, the Monte Carlo estimator produces a noisy estimate
\begin{equation}
\hat{u}(x) = u(x) + \epsilon(x),
\end{equation}
where the noise $\epsilon(x)$ is modeled as a zero-mean Gaussian random variable with spatially varying variance,
\begin{equation}
\epsilon(x) \sim \mathcal{N}(0,\sigma^2(x)).
\end{equation}
Unlike standard regression settings with homoscedastic noise, the variance $\sigma^2(x)$ in Monte Carlo estimators is spatially varying, \changed{because estimates at different boundary positions arise from different} boundary integral representations~\eqref{equ:bie}.

\begin{figure}
    \centering
    \includegraphics[width=\linewidth]{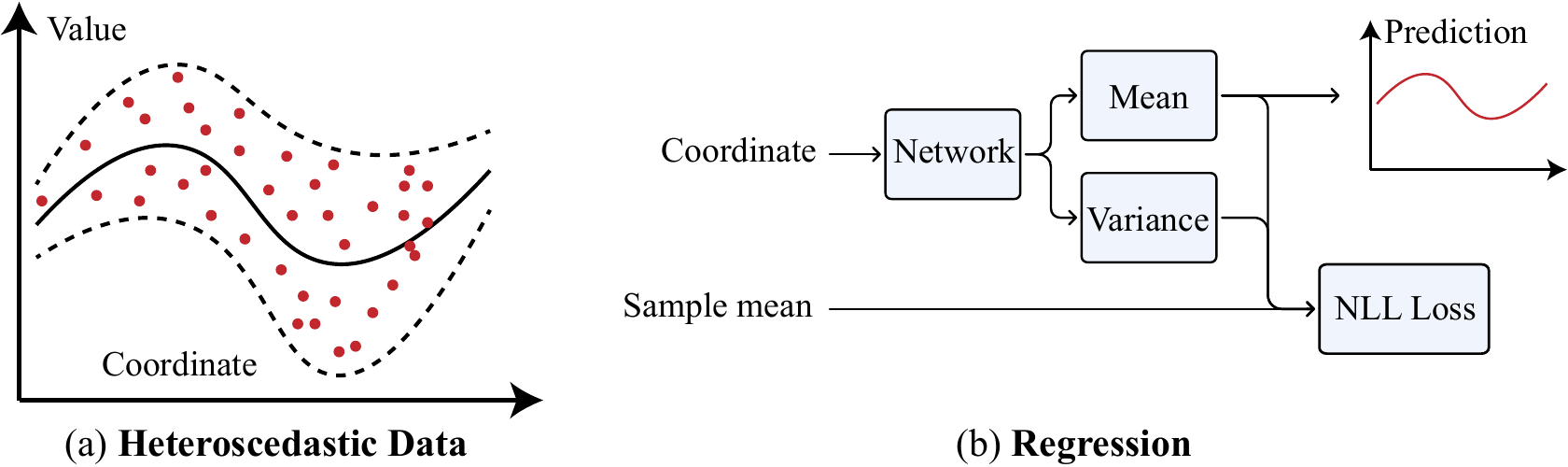}
    \caption{Illustration of heteroscedastic regression for boundary denoising. Left: noisy and spatially varying Monte Carlo estimates on the boundary. Right: a neural network jointly learns the mean and variance fields, explicitly modeling heteroscedastic uncertainty in the boundary solution.}
    \label{fig:regression}
\end{figure}

We therefore jointly regress both the mean solution field $u(x)$ and the local standard deviation $\sigma(x)$ using compact neural networks. The networks are trained by minimizing the negative log-likelihood~\cite{nix1994estimating}
\begin{equation}
\mathcal{L}(\hat{u},\sigma)
=
\mathbb{E}_{x \sim \partial\Omega}
\left[
\frac{(\hat{u}(x) - u(x))^2}{2\sigma^2(x)}
+ \frac{1}{2}\log \sigma^2(x)
\right].
\end{equation}
This formulation explicitly accounts for spatially varying uncertainty and prevents overfitting to high-variance samples.

In our implementation, we employ SIREN~\cite{sitzmann2020implicit} networks due to their strong empirical performance in representing smooth spatial signals. Figure \ref{fig:regression} illustrates this process and our network architecture. However, the denoising framework is model-agnostic, and alternative representations such as Fourier feature networks~\cite{tancik2020fourier} can be used interchangeably without modifying the underlying formulation. We also incorporate the $\beta$-NLL ~\cite{seitzer2022pitfalls} to enhance the stability of gradient-based optimization.

\changed{The denoiser is not pre-trained. For each denoising pass, we train a scene-specific regression model from a random initialization using the current Monte Carlo boundary samples. In the experiments reported here, denoising is applied once to the final boundary estimate after the fixed-point iterations.}

% The denoiser is applied after each Monte Carlo estimation stage and before the proxy solution is used to update boundary coefficients. Since the denoiser is trained on Monte Carlo estimates and does not alter the expectation of the estimator in the infinite-sample limit, the fixed-point iteration remains consistent in expectation. Empirically, denoising substantially reduces iteration-to-iteration oscillations and accelerates convergence of the nonlinear solver.

% \subsection{Other Non-linear Boundary Conditions}

% \section{Implementation}

\section{Results}

In this section, we evaluate the proposed method \changed{across a variety of setups} and compare \changed{it} with the naive linearization method and \changed{a well-established} FEM solver.

\subsection{Implementation}

We implement the proposed Monte Carlo solver in C++, with geometric queries and closest-distance evaluations provided by FCPW~\cite{FCPW}. Boundary denoising is performed using a heteroscedastic regression model implemented in PyTorch~\cite{pytorch}. Following prior work, we adopt a SIREN architecture~\cite{sitzmann2020implicit} for its ability to represent smooth functions with high-frequency detail, and optimize the network using the $\beta$-NLL loss~\cite{seitzer2022pitfalls} with $\beta = 0.5$. The neural network is trained using the Adam optimizer~\cite{Kingma2014AdamAM} for 3000 epochs. The learning rate is initialized at $10^{-3}$ and decayed to $10^{-4}$ using a cosine annealing schedule. We run the Monte Carlo solver on an Intel Core i7-10700K CPU with thread-level parallelization, and train the heteroscedastic regression model on an NVIDIA A100 GPU with 40\,GB memory.

In heat-related equations such as \eqref{equ:heat_bvp:d}, the nonlinear $T^4$ term can become numerically large and the Stefan--Boltzmann constant $\sigma$ is very small, which may lead to instability. To improve numerical stability, we rescale the temperature field by a factor $c$, so the radiative term becomes
\begin{equation}
\frac{\sigma\epsilon c^4}{k_0} T(x)^4.
\end{equation}
We choose $c$ such that the rescaled coefficient of the $T^4$ term is $10^{-3}$ in all experiments.

\subsection{Evaluation with Synthetic Functions}
\label{sec:synthetic}

\begin{figure}
    \centering
    \includegraphics[width=\linewidth]{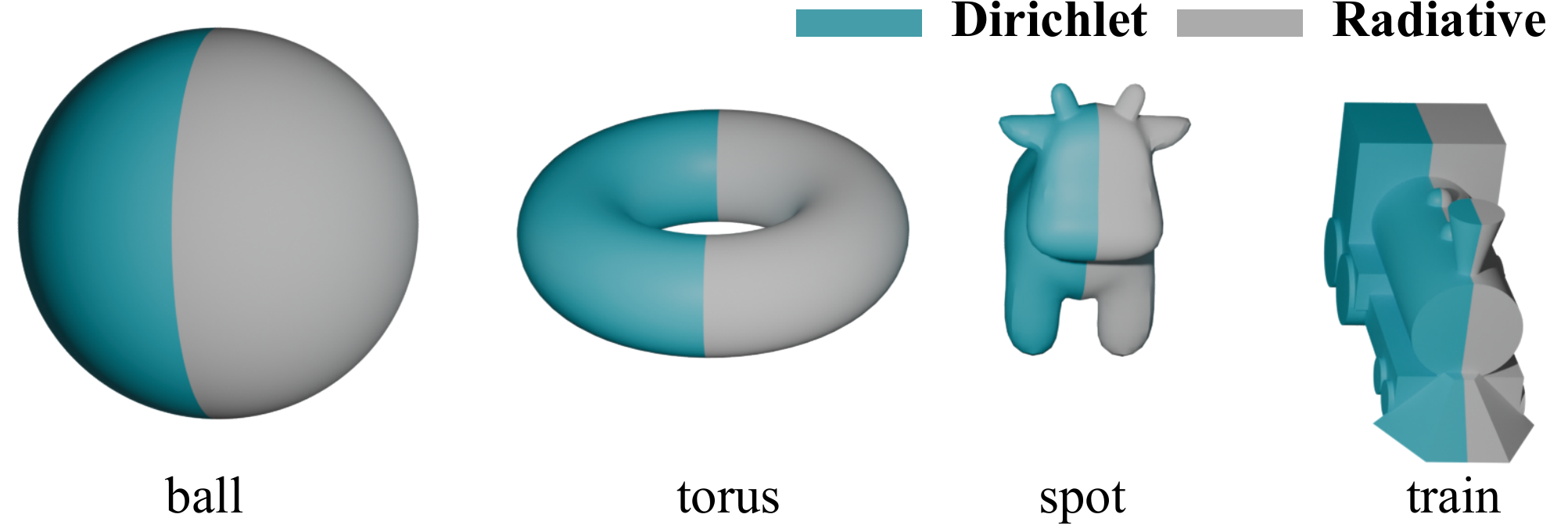}
    \caption{Models for evaluation. Each model is bisected into two parts, one for the Dirichlet boundary and the other for the \changed{radiative} boundary.}
    \label{fig:geometry}
\end{figure}

\begin{figure*}
    \centering
    \includegraphics[width=\linewidth]{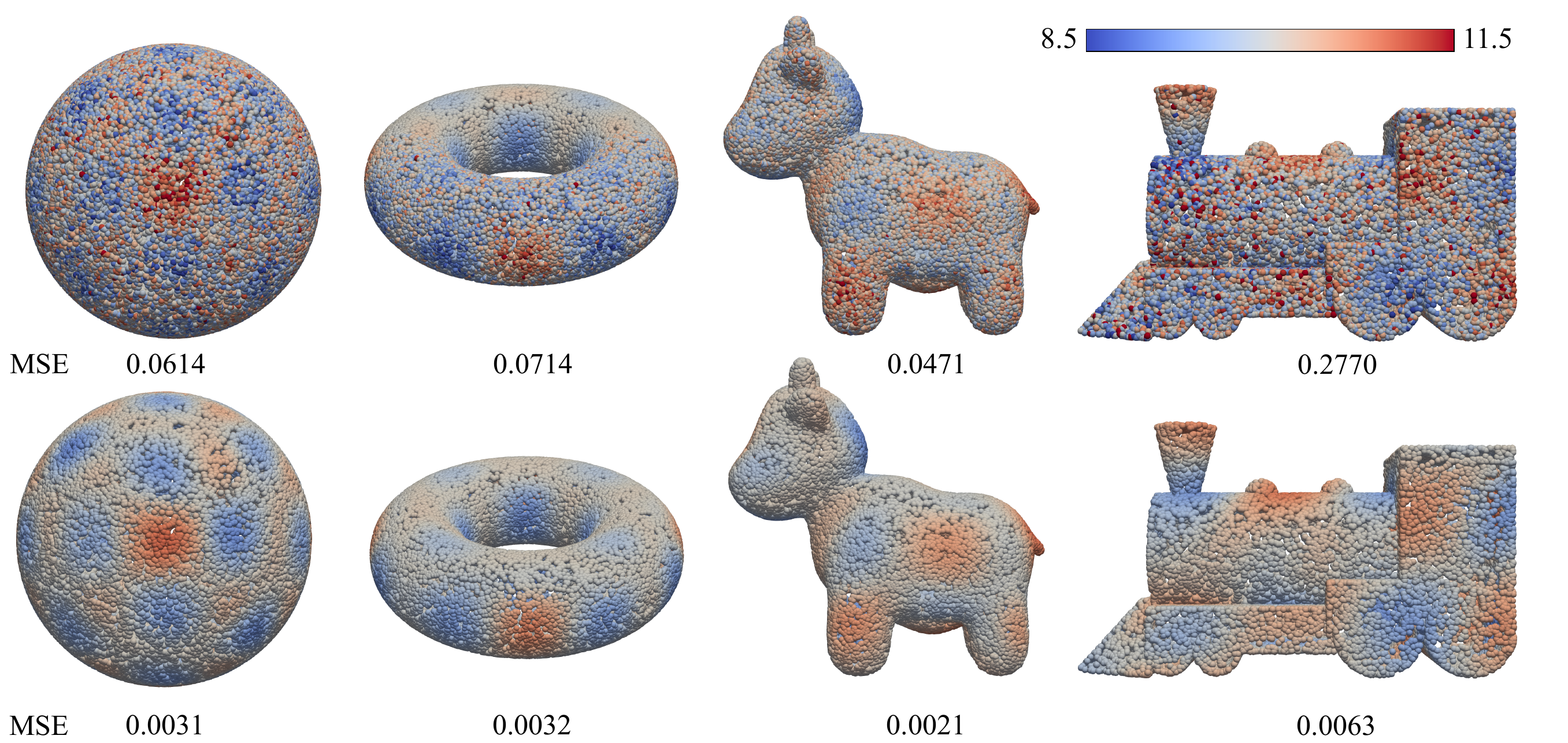}
    \caption{Effect of boundary denoising on Monte Carlo estimates. \changed{Here we are visualizing the MLS point cloud.} The first row shows the raw boundary estimates obtained from the Monte Carlo solver before denoising, while the second row shows the corresponding results after applying the proposed heteroscedastic regression-based denoising method. From the MSE value, we can see that the denoising process significantly reduces the variance of the estimates, leading to a much smoother and more accurate boundary solution.}
    \label{fig:denoise}
\end{figure*}

We evaluate the proposed fixed-point Monte Carlo solver using synthetic solutions with analytically constructed boundary conditions. This setting enables quantitative error measurement and controlled assessment of convergence behavior under nonlinear radiative boundary conditions.

The ground-truth solution is defined as
\begin{equation}
u(x) = 10 + \cos(\omega x)\cos(\omega y)\cos(\omega z),
\end{equation}
where $\omega$ controls the spatial frequency of the solution and is fixed across all experiments. The domain boundaries (Figure~\ref{fig:geometry}) are partitioned into two disjoint regions: a Dirichlet boundary on the left half and a radiative boundary on the right half. On the radiative boundary, the solution $u$ satisfies
\begin{equation}
\frac{\partial u}{\partial \vec{n}} + \gamma u^4 = f(x),
\end{equation}
where $\gamma = 10^{-3}$ and the source term $f(x)$ is analytically computed such that the ground-truth solution $u(x)$ satisfies the boundary condition exactly.

To evaluate robustness with respect to initialization, we intentionally choose a constant initial proxy solution of $8$ on the radiative boundary, which deviates significantly from the true boundary solution. This setup tests the ability of the fixed-point iteration to recover from an inaccurate initial guess.

At each iteration, we uniformly sample $10^4$ points on the radiative boundary and estimate the solution at these locations using the Walk-on-Stars method with 256 random walks per point. The relaxation coefficient is fixed to $0.25$ for all runs. Boundary geometries used in the evaluation are shown in Figure~\ref{fig:geometry}.

In our notation, iteration~1 corresponds to a single linearization of the radiative boundary condition using the initial proxy solution, without fixed-point refinement. Mean squared error (MSE) values on the radiative boundary after each iteration are reported in Table~\ref{tab:iteration_error}. The error is computed over the sampled boundary points with respect to the analytical ground-truth solution.

\begin{figure*}
    \centering
    \subfigbottomskip=2pt
    \subfigcapskip=-5pt
        \subfigure[]{
        \includegraphics[width=0.24\linewidth]{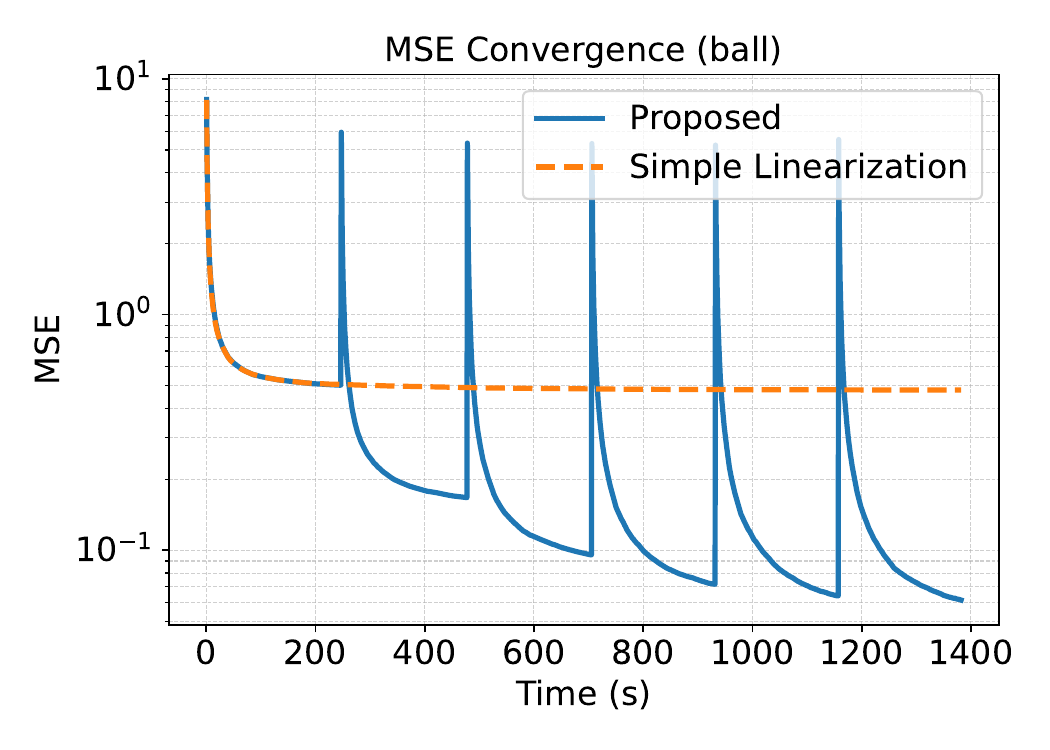}}
    \subfigure[]{
        \includegraphics[width=0.24\linewidth]{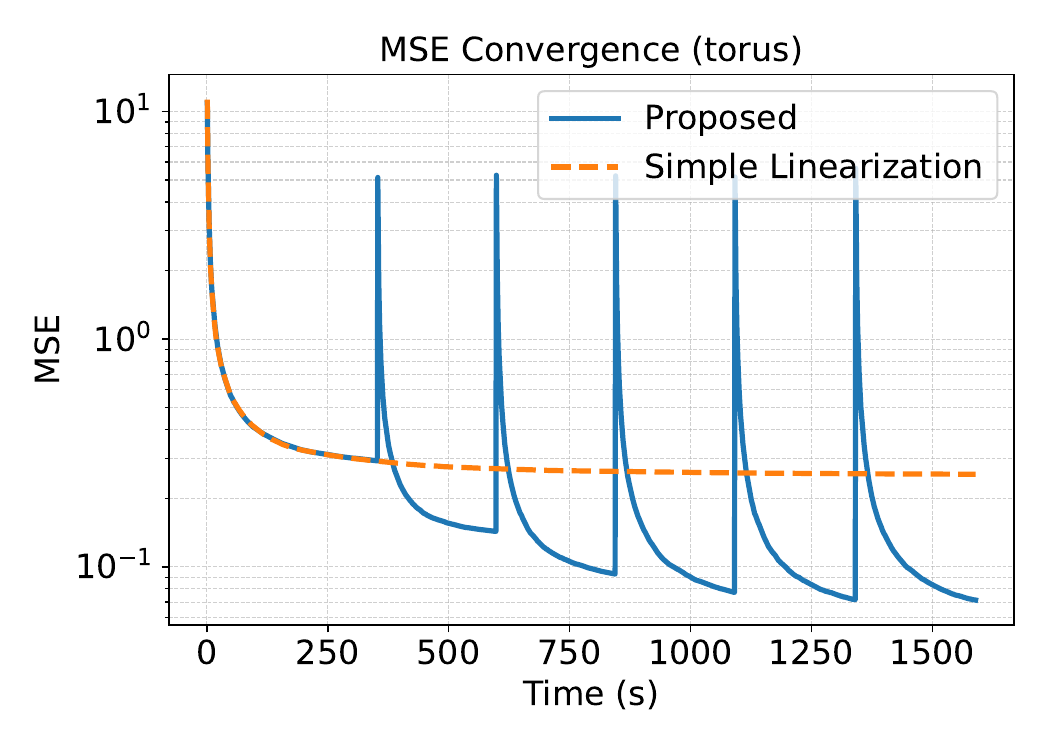}}
    \subfigure[]{
        \includegraphics[width=0.24\linewidth]{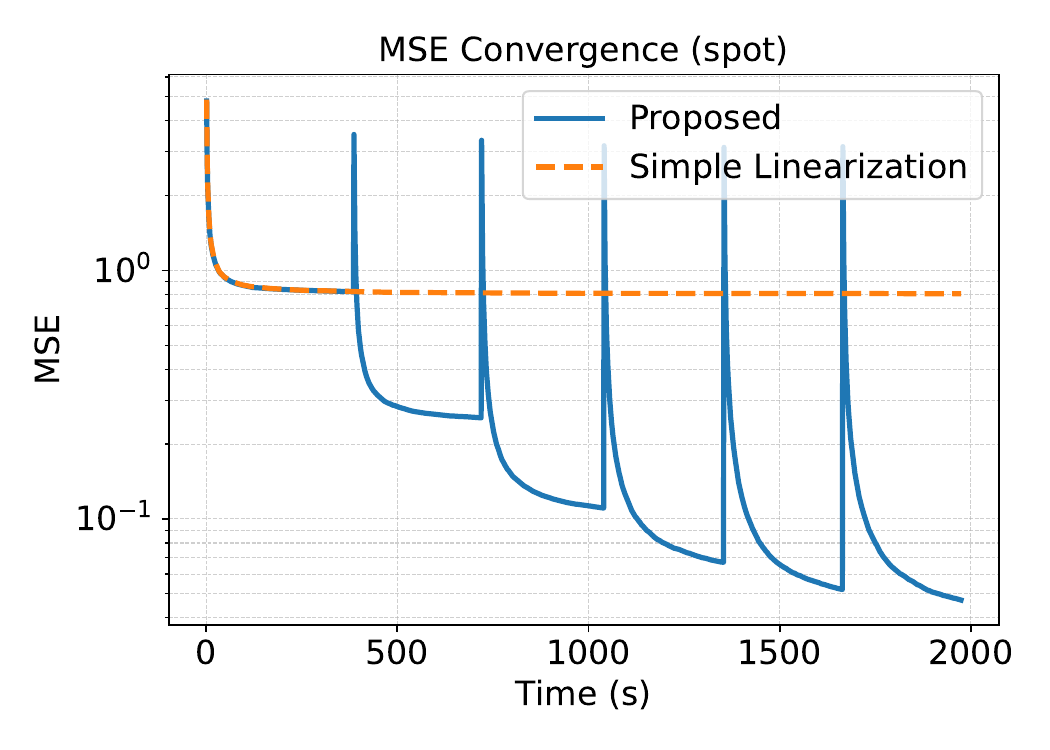}}
    \subfigure[]{
        \includegraphics[width=0.24\linewidth]{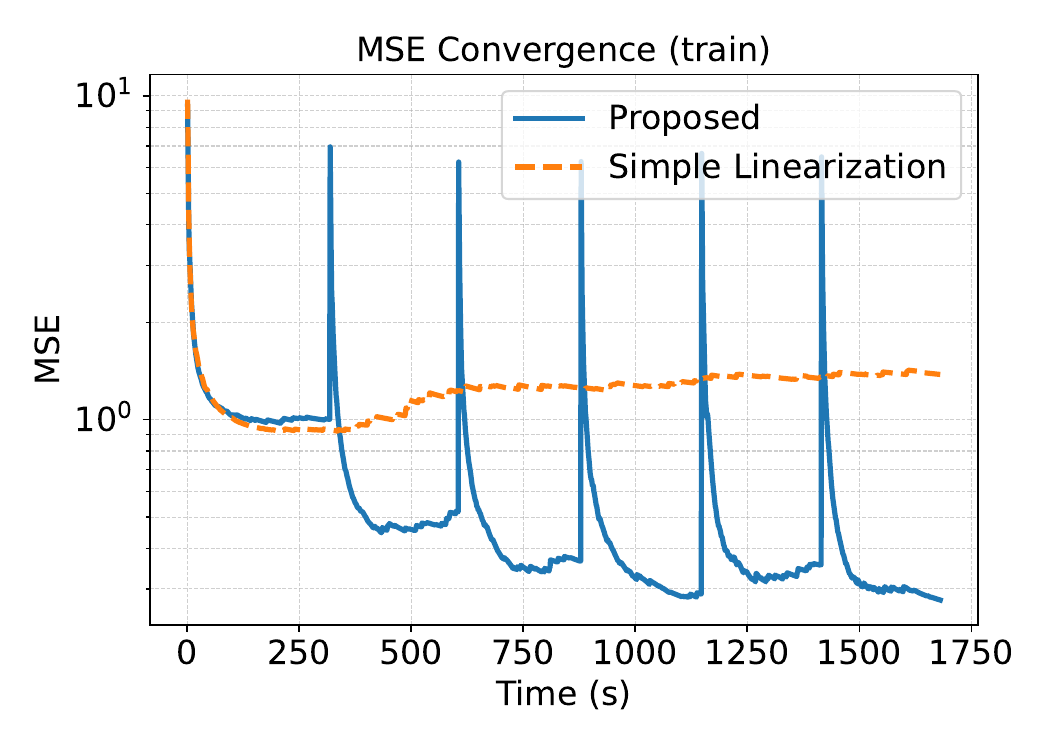}}
    \caption{Mean squared error (MSE) versus computation time for the proposed fixed-point Monte Carlo solver on four representative geometries: (a) Ball, (b) Torus, (c) Spot, and (d) Train. Each curve shows the convergence behavior across successive Picard iterations under a fixed computational budget per iteration. The results demonstrate rapid error reduction in early iterations and stable convergence over time, highlighting the efficiency and robustness of the proposed approach across different geometric complexities.}
    \label{fig:mse_curve}
\end{figure*}

Across all tested geometries, the proposed Picard-style fixed-point iteration consistently reduces the mean squared error. Empirically, convergence is observed within approximately 5--10 iterations. Compared to a single linearization step (iteration~1), the fixed-point scheme achieves substantially lower error even when initialized with an inaccurate proxy solution.

Figure~\ref{fig:denoise} provides a qualitative visualization of the boundary solution estimates before and after denoising. In addition to the fixed-point iteration, we apply the proposed heteroscedastic regression-based denoising method to the boundary estimates. The resulting denoised errors, reported in the final row of Table~\ref{tab:iteration_error}, exhibit \changed{a} variance reduction of approximately one to two orders of magnitude across all geometries. This demonstrates that boundary denoising effectively suppresses Monte Carlo variance and further stabilizes the nonlinear iteration.

\begin{table}[h]
\centering
\caption{Iteration-wise convergence behavior of the proposed fixed-point Monte Carlo solver on four representative geometries. Each entry reports the boundary estimation error on the radiative boundary $\partial\Omega_{\changedd{R}}$ at a given iteration. The row labeled \emph{Denoised} shows the error after applying the heteroscedastic regression-based boundary denoising described in Section~\ref{sec:denoising}. The results demonstrate rapid error reduction across iterations and a significant additional improvement from the proposed denoising strategy.}
\label{tab:iteration_error}
\begin{tabular}{ccccc}
\toprule
Iteration & Ball & Torus & Spot & Train \\
\midrule
1 & 0.5014 & 0.2917 & 0.8287 & 1.0014 \\
2 & 0.1671 & 0.1430 & 0.2550 & 0.5191 \\
3 & 0.0954 & 0.0929 & 0.1105 & 0.3652 \\
4 & 0.0716 & 0.0771 & 0.0669 & 0.2891 \\
5 & 0.0640 & 0.0717 & 0.0520 & 0.3558 \\
6 & 0.0614 & 0.0714 & 0.0471 & 0.2770 \\
\midrule
Denoised & 0.0031 & 0.0032 & 0.0021 & 0.0063 \\
\midrule
MC Runtime (s)  & 1381.9 & 1589.4 & 1971.1 & 1680.5 \\
Denoise time (s) & 16.7 & 16.81 & 20.1 & 18.4 \\
\bottomrule
\end{tabular}
\end{table}

\subsection{Light-Heat Coupled Problems: Radiation in Vacuum}
\label{sec:coupled}

We next demonstrate the applicability of the proposed Monte Carlo framework to a coupled light-heat problem arising in vacuum environments. In this setting, radiative transfer governs energy exchange at the surface, while heat conduction redistributes absorbed energy inside the solid. The two processes are strongly coupled through a nonlinear radiative boundary condition, making the problem challenging for conventional mesh-based solvers.

\paragraph{Physical Setup.}
We consider an irregular geometry placed in vacuum and illuminated by a distant directional light source that mimics solar radiation. Since the surrounding medium is vacuum, no convective heat transfer is present, and energy exchange occurs exclusively through radiation at the surface and conduction within the asteroid. The interior temperature field $T$ satisfies the steady-state heat equation
\begin{equation}
    -\nabla \cdot (k \nabla T) = 0,
\end{equation}
where $k$ denotes the thermal conductivity. On the surface, the radiative boundary condition is given by
\begin{equation}
    -k \frac{\partial T}{\partial \vec{n}}(\mathbf{x})
    = \epsilon \sigma T^4(\mathbf{x}) - \Phi_{\mathrm{in}}(\mathbf{x}),
\end{equation}
where $\epsilon$ is the surface emissivity and $\sigma$ is the Stefan--Boltzmann constant. 
The incident radiative flux $\Phi_{\mathrm{in}}(\mathbf{x})$ is induced by a parallel light source and is given by
\begin{equation}
    \Phi_{\mathrm{in}}(\mathbf{x})
    = L_0 \, (\boldsymbol{\omega}_0 \cdot \vec{n}(\mathbf{x}))_+ \, V(\mathbf{x}, \boldsymbol{\omega}_0),
\end{equation}
where $L_0$ is the constant incident radiative intensity, $\boldsymbol{\omega}_0$ denotes the fixed incoming light direction, $\vec{n}(\mathbf{x})$ is the outward surface normal, and $V(\mathbf{x}, \boldsymbol{\omega}_0) \in \{0,1\}$ is a visibility indicator computed by ray tracing, which accounts for self-shadowing. The operator $(\cdot)_+ = \max(\cdot, 0)$ enforces the cosine law. For the initial guess, we use 
\begin{equation}
    T_0 = \left( \frac{L_0}{4 \epsilon \sigma} \right)^{1/4}.
\end{equation}

\paragraph{Monte Carlo Coupling Strategy.}
Radiative transfer is simulated using Monte Carlo ray tracing to estimate the incident flux $\Phi_{\text{in}}$ at surface locations. Heat conduction inside the asteroid is solved using the walk-on-stars method, which avoids volumetric meshing and operates directly on the continuous geometry. The nonlinear coupling induced by the $T^4$ emission term is resolved through a Picard-style fixed-point iteration. At each iteration, the radiative flux and the conductive solution are alternately updated until convergence. This coupling is performed entirely at the boundary, preserving the processing-free nature of the solver and avoiding explicit discretization of the interior domain.

\begin{figure*}
    \centering
    \includegraphics[width=\linewidth]{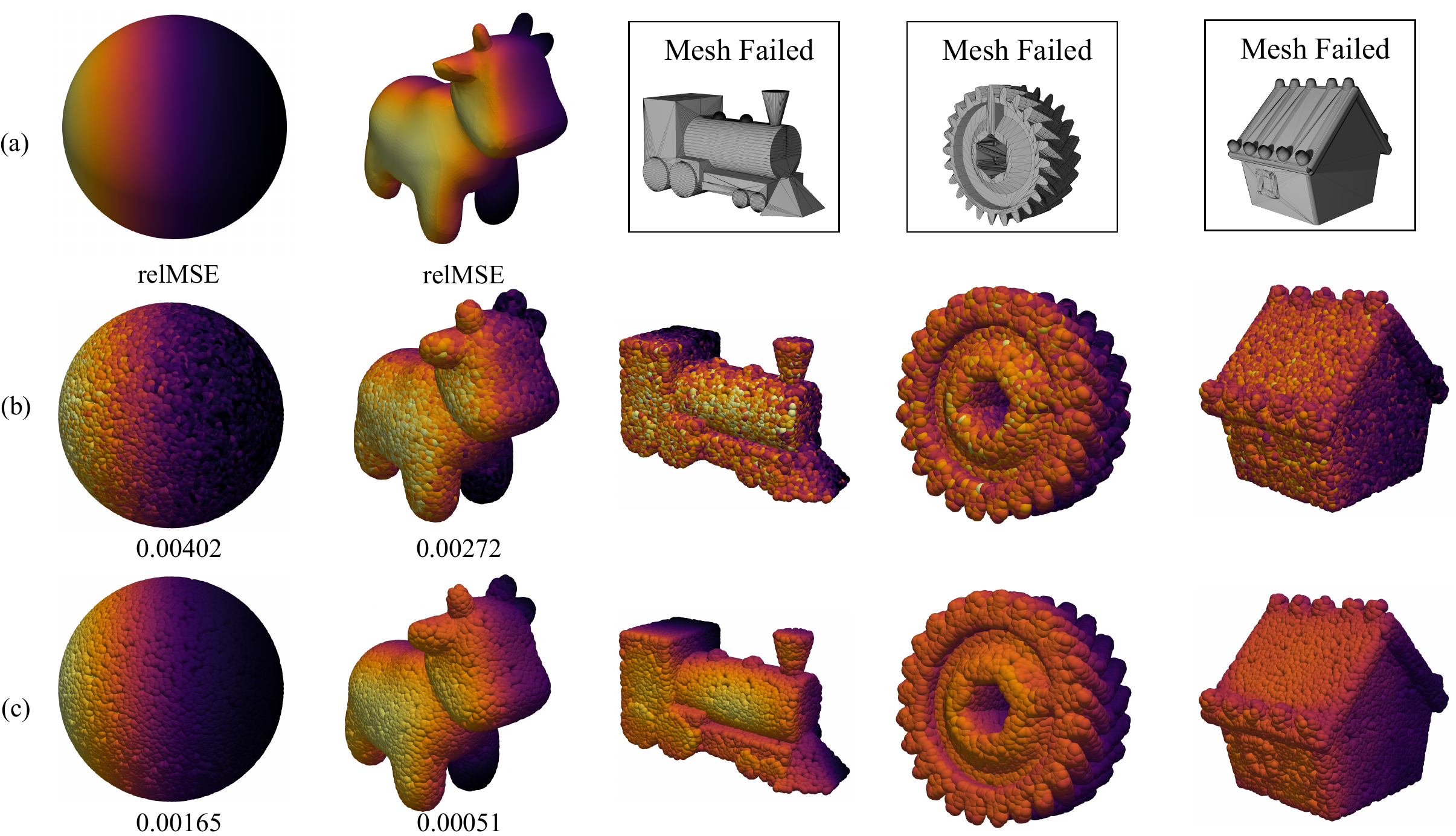}
    \caption{Comparison on a coupled light-heat problem. \changed{Here we are visualizing the MLS point cloud.} A parallel light source illuminates a solid sphere modeled as an ideal black body ($\epsilon = 1$, $k = 1$, $L_0 = 1000 \mathrm{W/m^2}$). (a) Reference FEM solution with radiative boundary conditions provided by COMSOL Multiphysics\textregistered{} \cite{comsol2025multiphysics}. (b) Result of the proposed Monte Carlo solver before denoising. (c) Result after heteroscedastic regression denoising. The proposed method closely matches the FEM solution on \changedd{simplified} geometries like the sphere and the spot model. For more complex geometries, COMSOL Multiphysics\textregistered{} \cite{comsol2025multiphysics} cannot generate \changed{high-quality volumetric meshes} automatically, while the proposed method remains robust and produces plausible results.}
    \label{fig:blackball}
\end{figure*}

\paragraph{Validation.}
To validate the correctness of the proposed coupled formulation, we compare our method against a classical finite element method (FEM) solver with radiative boundary conditions. 
All test geometries are illuminated by a parallel light source and modeled as ideal black bodies. 
As shown in Figure~\ref{fig:blackball}, the temperature fields produced by the proposed method closely match the FEM reference solutions generated by commercial software COMSOL Multiphysics\textregistered{} \cite{comsol2025multiphysics} for moderately complex geometries such as the sphere and the spot model.

For geometries with highly irregular topology and fine-scale features, such as the train model, generating a high-quality volumetric mesh suitable for FEM becomes challenging, which limits the applicability of FEM solvers in these cases. 
In contrast, the proposed Monte Carlo approach operates directly on surface geometry and does not require volumetric meshing, enabling robust simulation on complex models.

For each example, we use 128 Monte Carlo samples per iteration and perform six fixed-point iterations, resulting in a total runtime of approximately 30 minutes on our hardware. 
In comparison, FEM produces solutions within approximately one minute for geometries where meshing is feasible, but remains sensitive to boundary mesh quality and discretization.
These results highlight the complementary nature of the two approaches: FEM offers high efficiency on well-conditioned meshes, while our method provides superior robustness and geometric flexibility for complex models.

\paragraph{Asteroid Example.}
Compared to FEM solvers, Monte Carlo PDE methods offer significantly greater flexibility for handling complex geometries without volumetric meshing. To demonstrate this advantage, Figure~\ref{fig:coupled} visualizes the steady-state surface temperature field of an asteroid model consisting of approximately 440k triangles. The asteroid is modeled as a gray body with emissivity $\epsilon = 0.9$ and thermal conductivity $k_0 = 1$, and is illuminated by a parallel light source with irradiance $L_0 = 1073~\mathrm{W/m^2}$, corresponding to solar radiation in orbit.

The visualization in Figure~\ref{fig:coupled} is generated using a deferred shading pipeline~\cite{deering1988triangle} for surface sampling. World-space positions are obtained from the rasterizer, and temperatures are evaluated at these locations using the MLS boundary representation described in Section~\ref{sec:boundary_representation}. Regions directly exposed to the light source reach significantly higher temperatures, while shadowed regions remain substantially cooler. Despite the stochastic nature of the Monte Carlo estimator and the strong nonlinearity introduced by radiative emission, the resulting temperature field varies smoothly across the surface.

To validate the solution, we compare against a FEM reference generated using COMSOL Multiphysics\textregistered{} \cite{comsol2025multiphysics}. Due to the geometric complexity of the asteroid, direct tetrahedral meshing of the original model is not feasible; instead, a simplified geometry is used for FEM simulation, resulting in a system with approximately 405k degrees of freedom. As shown in Figure~\ref{fig:coupled}, the proposed method produces a temperature field consistent with the FEM solution on the original geometry, while the vanilla linearization approach exhibits significant deviation.

Empirically, the fixed-point iteration converges within 5--10 iterations, even under the strong nonlinearity of the radiative boundary condition. This experiment highlights the ability of the proposed framework to solve coupled light--heat problems on complex geometries in vacuum without volumetric meshing, view-factor precomputation, or domain decomposition, making it particularly suitable for large-scale and geometrically intricate scenarios where traditional deterministic solvers become cumbersome.

\begin{figure*}
    \centering
    \subfigbottomskip=2pt
    \subfigcapskip=-5pt
        \subfigure[]{
        \includegraphics[width=0.24\linewidth]{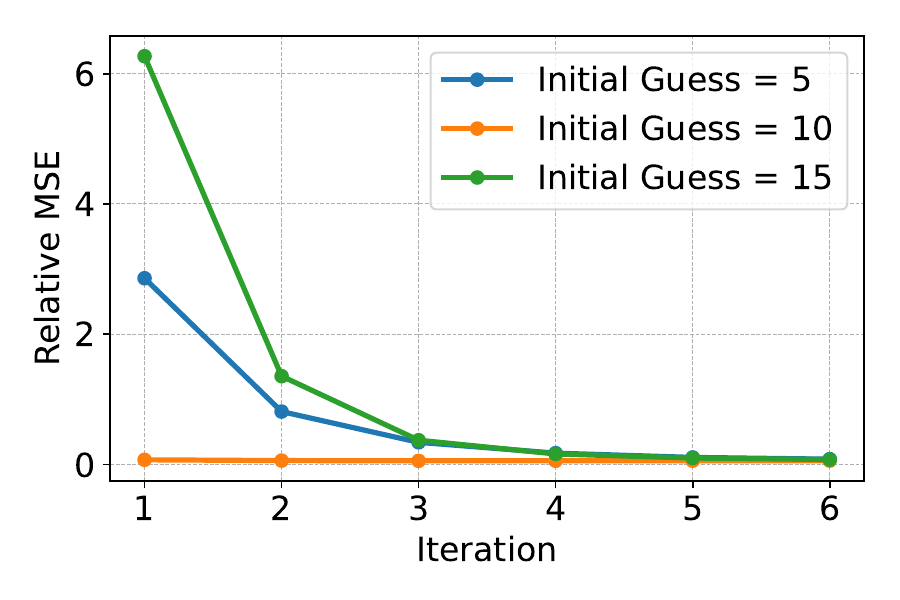}}
    \subfigure[]{
        \includegraphics[width=0.24\linewidth]{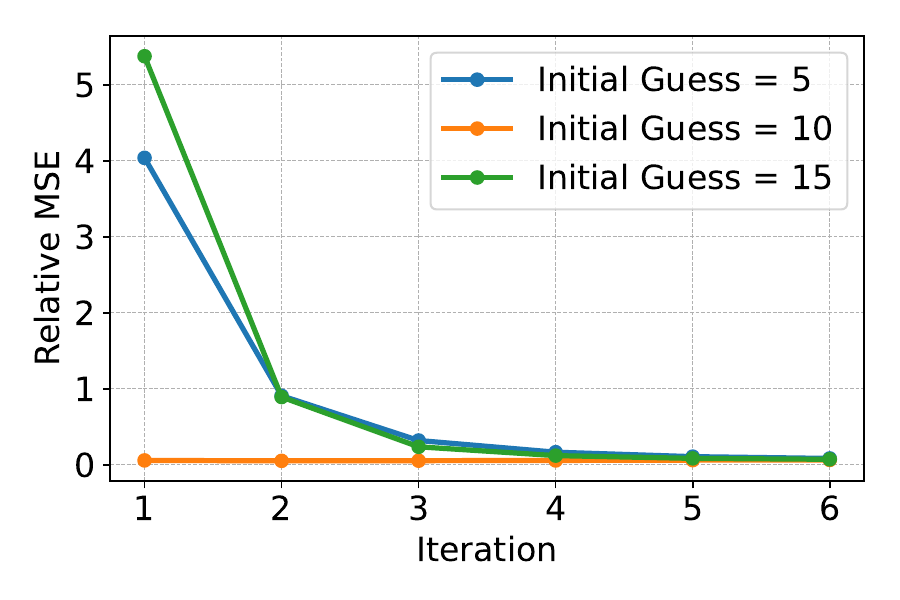}}
    \subfigure[]{
        \includegraphics[width=0.24\linewidth]{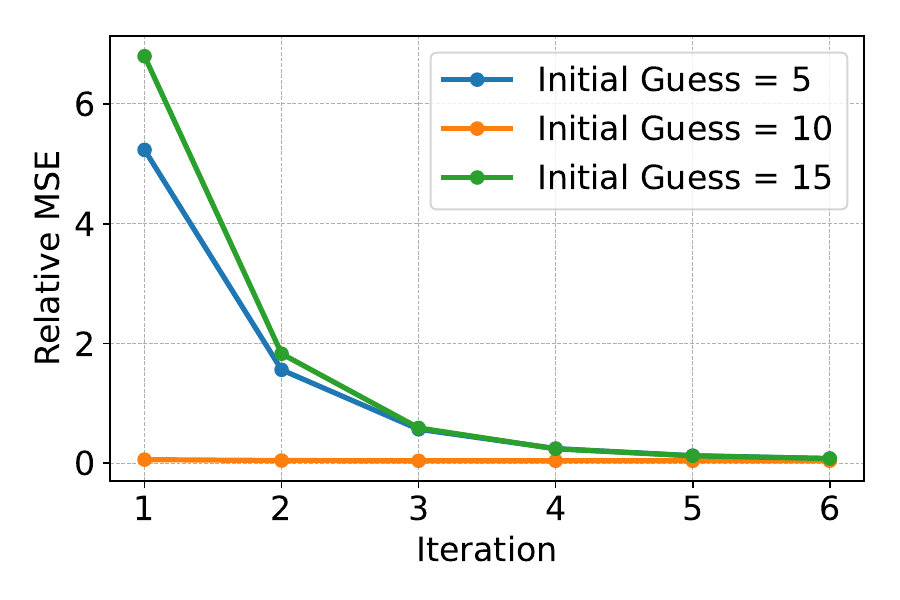}}
    \subfigure[]{
        \includegraphics[width=0.24\linewidth]{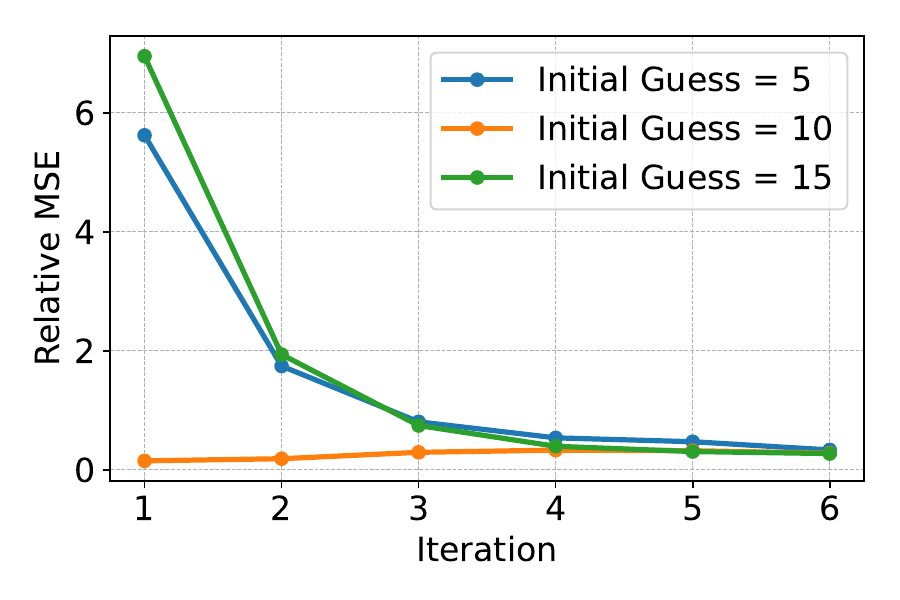}}
    \caption{Effect of the initial proxy solution on convergence accuracy. 
    Mean squared error (MSE) versus computation time is shown for four geometries: (a) Ball, (b) Torus, (c) Spot, and (d) Train. When the initial proxy solution is close to the ground truth, a single-step linearization yields competitive accuracy. However, as the initial proxy deviates from the true solution, linearization exhibits persistent bias and elevated error. In contrast, the proposed fixed-point iteration consistently reduces error over time and converges to significantly more accurate solutions across all geometries.}

    \label{fig:initial_ablation}
\end{figure*}

\begin{figure*}
    \centering
    \subfigbottomskip=2pt
    \subfigcapskip=-5pt
        \subfigure[]{
        \includegraphics[width=0.24\linewidth]{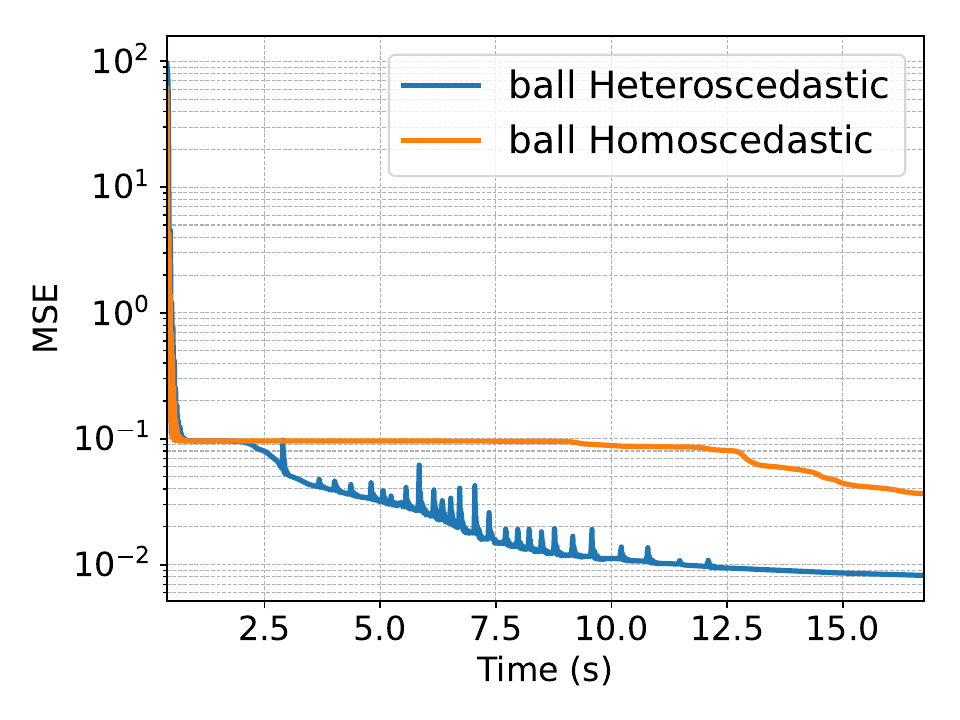}}
    \subfigure[]{
        \includegraphics[width=0.24\linewidth]{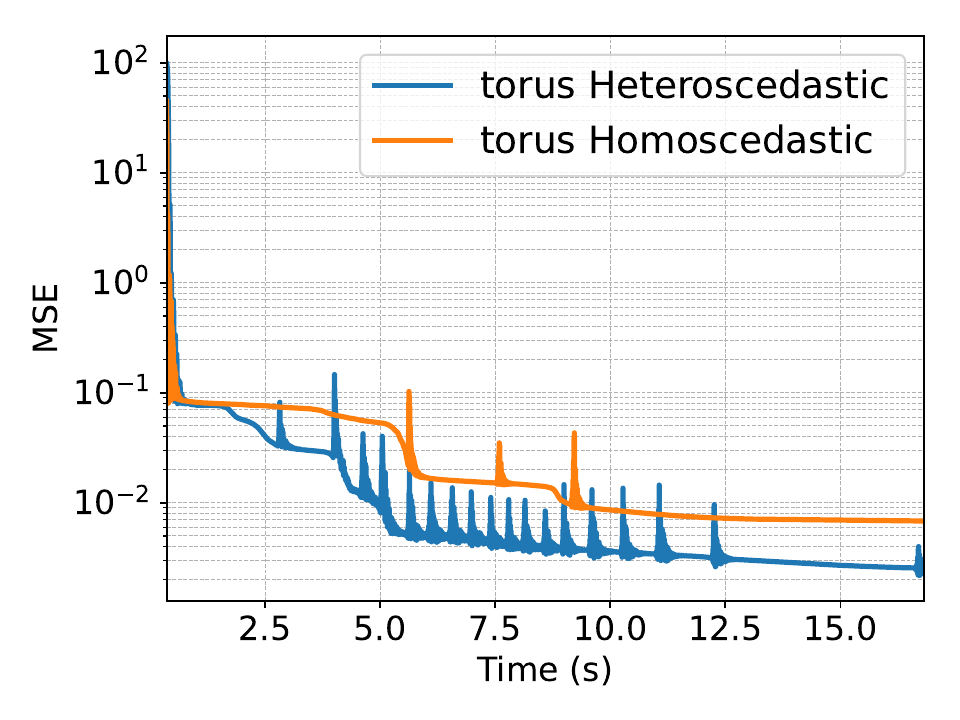}}
    \subfigure[]{
        \includegraphics[width=0.24\linewidth]{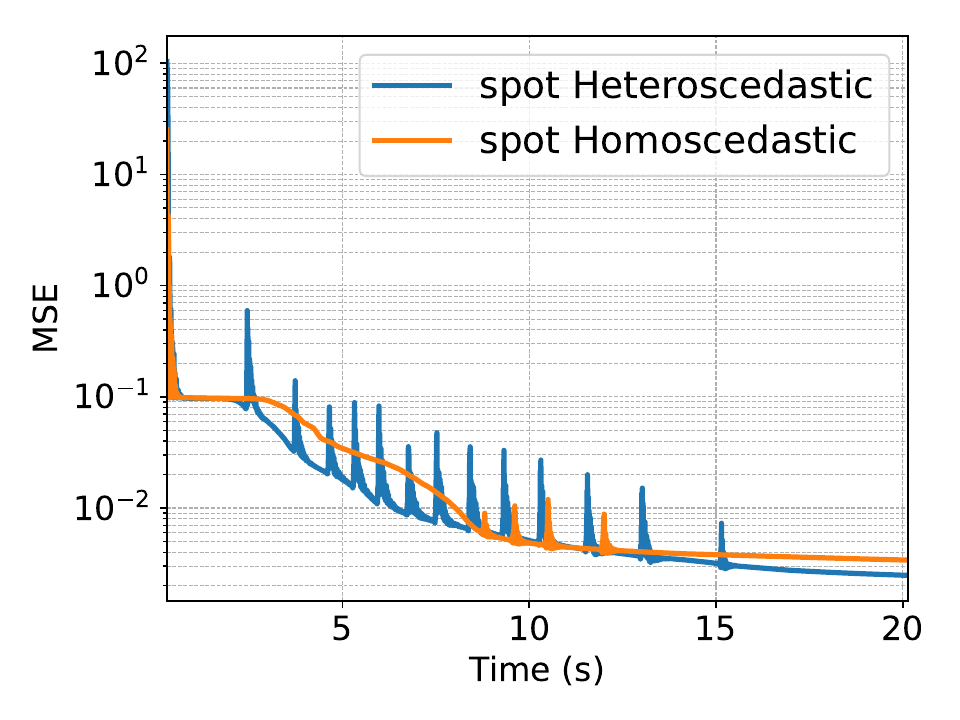}}
    \subfigure[]{
        \includegraphics[width=0.24\linewidth]{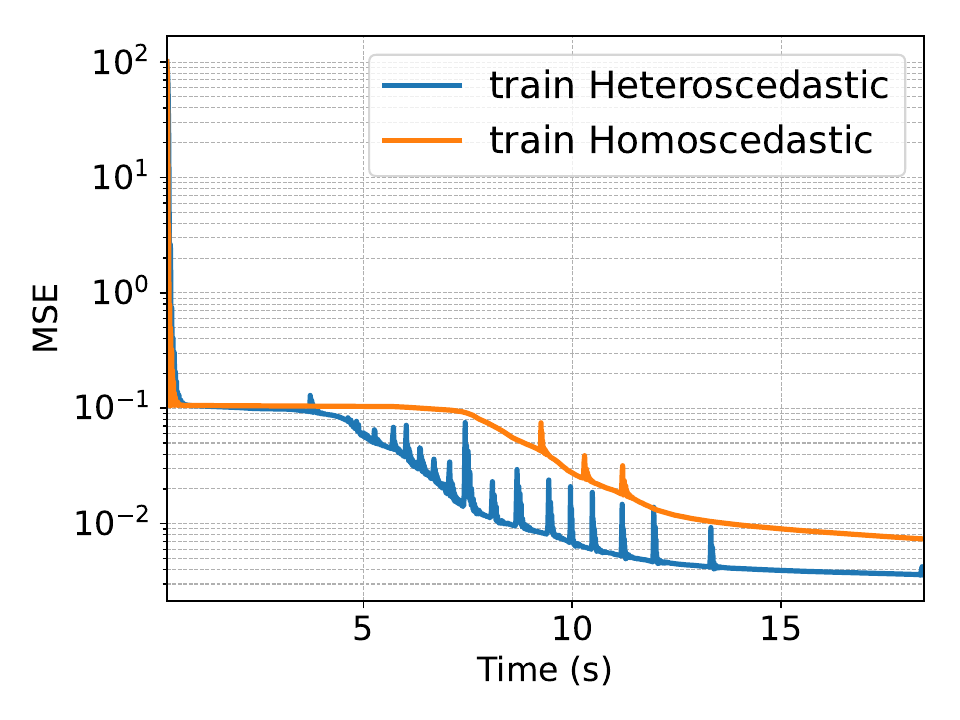}}
    \caption{Comparison of heteroscedastic and homoscedastic regression for boundary denoising. Mean squared error (MSE) versus computation time is shown for four geometries: (a) Ball, (b) Torus, (c) Spot, and (d) Train. The heteroscedastic regression consistently converges faster and reaches a lower error, highlighting the benefit of explicitly modeling spatially varying Monte Carlo noise.}
    \label{fig:regression_ablation}
\end{figure*}

\subsection{Ablation and Effect of Hyperparameters}
\label{sec:ablation}

We analyze the influence of key hyperparameters in the proposed fixed-point Monte Carlo framework, with a particular focus on the initialization of the radiative boundary temperature and the relaxation coefficient used in the Picard iteration. These parameters directly affect the stability and convergence behavior of the nonlinear boundary coupling.

\paragraph{Effect of Initial Proxy.}
We evaluate the robustness of the proposed method with respect to the initial proxy and compare it against the simple linearization strategy. 
Following the setup in Section~\ref{sec:synthetic}, we test three initial values (5, 10, and 15), where 10 corresponds to the \changed{best-aligned} proxy by construction. 
As shown in Figure~\ref{fig:initial_ablation}, when the proxy solution is close to the true solution, the linearized solve (one iteration) produces an accurate result. 
However, as the proxy deviates from the ground truth, the error after the first iteration increases substantially and remains biased. 
In contrast, the proposed fixed-point iteration progressively corrects this bias and consistently converges to lower-error solutions, demonstrating strong robustness to inaccurate initialization.

\paragraph{Effect of the Relaxation Coefficient.}
To evaluate the impact of relaxation on convergence, we repeat the black-body radiation experiment described in Section~\ref{sec:coupled} using different relaxation coefficients~$\alpha$. Figure~\ref{fig:relaxation_ablation_mse} reports the relative mean squared error (relMSE) at each iteration, while Figure~\ref{fig:relaxation_ablation} visualizes the intermediate solutions over iterations.

As shown in the results, smaller relaxation coefficients lead to more stable convergence. In contrast, larger values (e.g., $\alpha = 0.5$) may induce oscillatory behavior or stagnation due to the strong nonlinearity of the radiative boundary condition. These observations are consistent with the theoretical instability of unrelaxed Picard iterations in nonlinear boundary value problems.

\begin{figure}
    \centering
    \includegraphics[width=\linewidth]{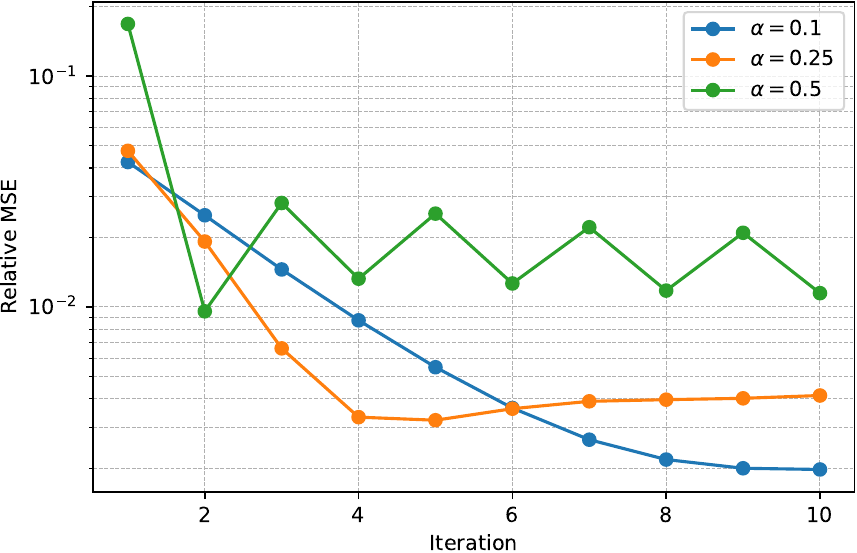}
    \caption{Iteration-wise relMSE for the black-body radiation experiment under different relaxation coefficients~$\alpha$. A small relaxation coefficient ($\alpha = 0.1$) converges slowly but achieves the lowest final error. A moderate value ($\alpha = 0.25$) \changed{converges fastest} but plateaus at a slightly higher error. A large relaxation coefficient ($\alpha = 0.5$) converges the slowest and exhibits oscillatory behavior, preventing stable error reduction.}
    \label{fig:relaxation_ablation_mse}
\end{figure}

\begin{figure*}
    \includegraphics[width=\linewidth]{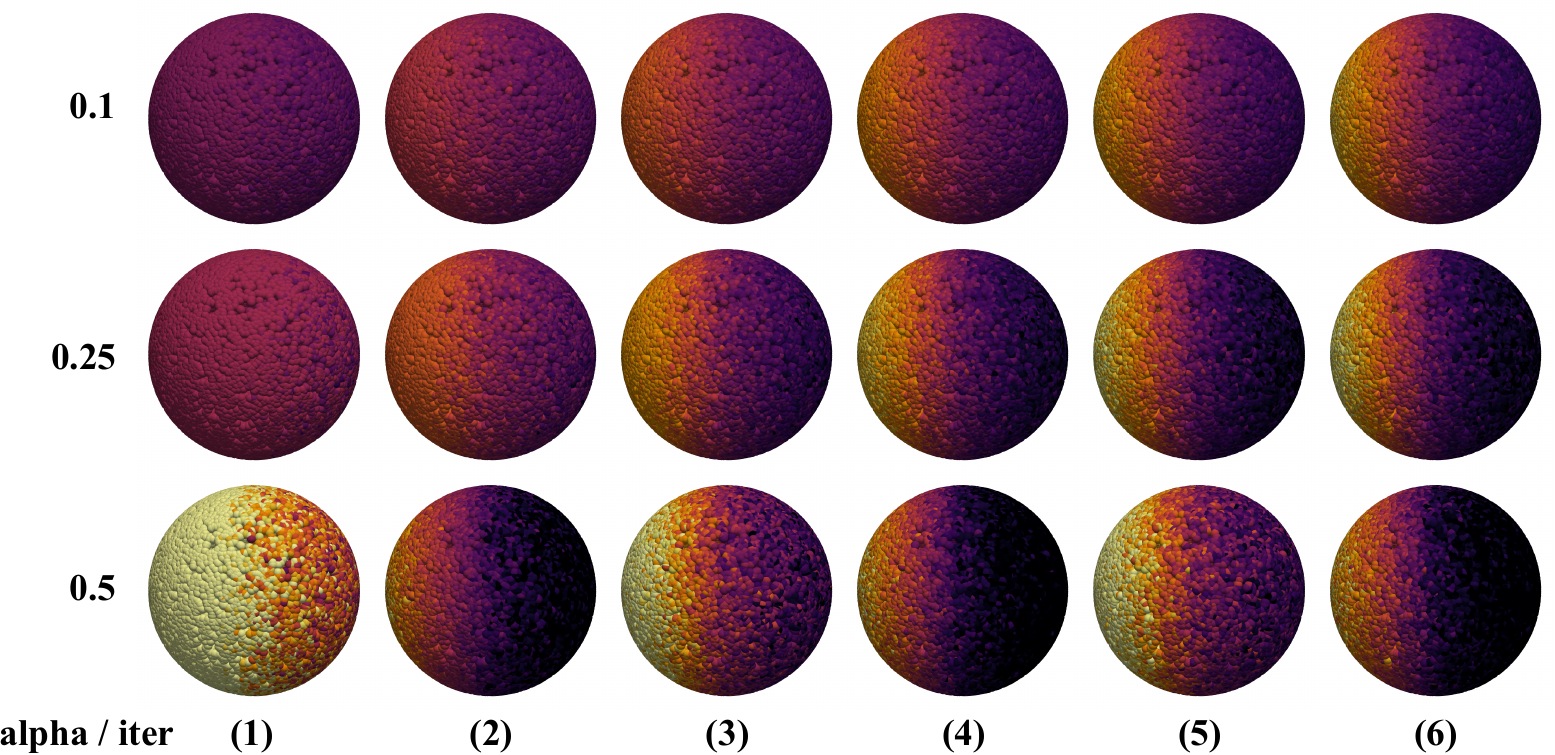}
    \caption{Intermediate solutions at iterations 0--6 for different relaxation coefficients~$\alpha$. \changed{Here we are visualizing the MLS point cloud.} 
For $\alpha = 0.1$ and $\alpha = 0.25$, the solution increases monotonically from an underestimated initial state and converges steadily toward the reference solution. 
In contrast, $\alpha = 0.5$ exhibits pronounced underestimate--overestimate oscillations, indicating instability of the unrelaxed Picard iteration under strong radiative nonlinearity.}
    \label{fig:relaxation_ablation}
\end{figure*}

\paragraph{Heteroscedastic vs. Homoscedastic Regression.}
We compare the proposed heteroscedastic regression-based denoiser with a homoscedastic baseline that assumes spatially constant noise variance. Both methods use the same network architecture, training schedule, and Monte Carlo sampling data. Figure~\ref{fig:regression_ablation} shows the mean squared error (MSE) as a function of computation time for both approaches. Across all tested geometries, the heteroscedastic model converges faster and achieves a lower squared error, demonstrating its effectiveness in capturing the spatially varying noise characteristics inherent in Monte Carlo boundary estimators.

\paragraph{Discussion.}
Together, these ablation studies confirm that the proposed fixed-point Monte Carlo solver is robust with respect to both initialization and relaxation parameters. While the nonlinearity of radiative boundary conditions inevitably introduces sensitivity in early iterations, the combination of Picard iteration and relaxation effectively mitigates these issues, enabling reliable convergence without adaptive meshing or intrusive preprocessing. This property is particularly important for preserving the \emph{processing-free} nature of Monte Carlo PDE solvers.

\subsection{Generalization and Failure Cases}
\label{sec:failure_cases}

\changed{Our fixed-point iteration framework is not limited to radiative boundary conditions and can be applied to a broader class of nonlinear boundary value problems. As discussed in Section~\ref{sec:iteration}, the proposed iteration scheme can be applied to other boundary conditions. In this part, we present results on nonlinear Robin boundary conditions with varying exponents in the radiative term. Specifically, we present results with boundary conditions of the form
\begin{equation}
    \frac{\partial u}{\partial \vec{n}}(x) + \gamma (u^4(x) + u^2\sqrt{u}) = h(x)
\end{equation}
The results are shown in Figure~\ref{fig:other_nonlinear}, demonstrating the generality and robustness for a wide range of nonlinear boundary conditions.}

\begin{figure}
    \includegraphics[width=\linewidth]{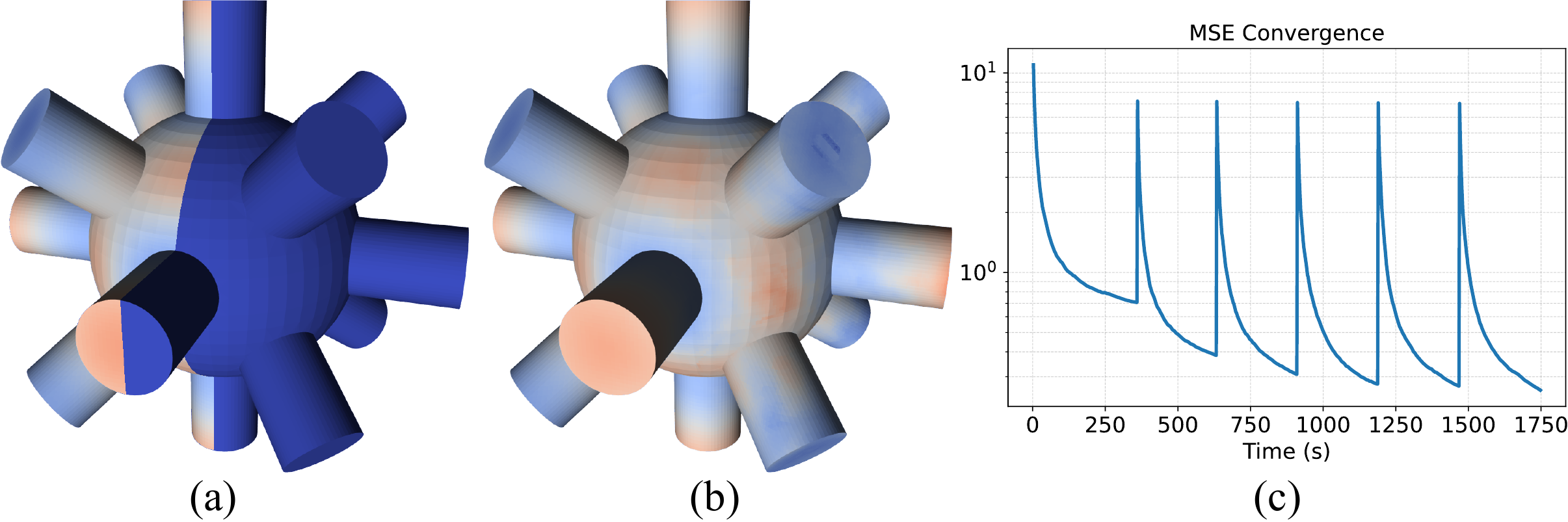}
    \caption{Results for other nonlinear boundary conditions. (a) Following the setup in Section~\ref{sec:synthetic}, the left part is a Dirichlet boundary and the right part is a nonlinear Robin boundary with a combination of $u^4$ and $u^{2.5}$ terms. (b) Result of the proposed method. (c) MSE curve across iterations.}
    \label{fig:other_nonlinear}
\end{figure}

\changed{
The proposed fixed-point iteration is not guaranteed to converge for all nonlinear boundary conditions or under all parameter settings. In particular, when the relaxation coefficient $\alpha$ is too large, the iteration may exhibit oscillatory behavior and fail to converge to a stable solution. Figure~\ref{fig:failure} illustrates such a failure case, where the error fluctuates across iterations without consistent reduction, indicating divergence of the iteration. Careful selection of the relaxation parameter and monitoring of convergence are essential to ensure stability in practice.
}

\begin{figure*}
    \includegraphics[width=\linewidth]{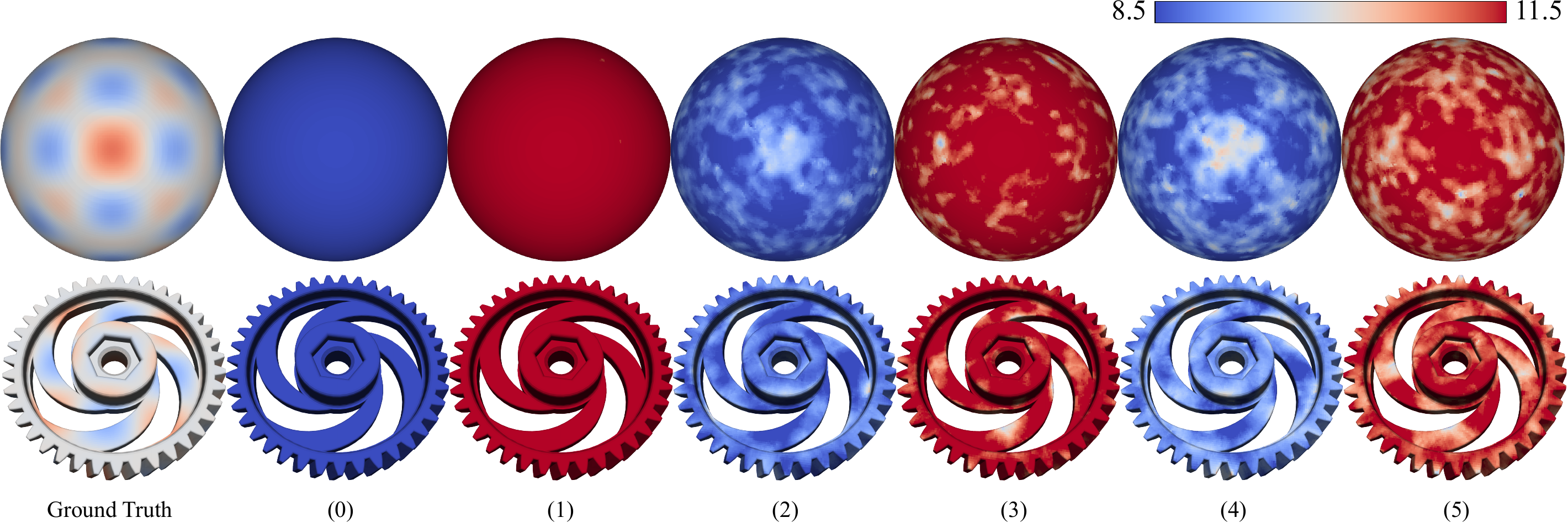}
    \caption{Failure cases of the fixed-point iteration. When the relaxation coefficient is too large (e.g., $\alpha = 0.5$), the iteration exhibits oscillatory behavior and fails to converge to a stable solution. }
    \label{fig:failure}
\end{figure*}

\section{Conclusion and Future Work}

In this work, we presented a fixed-point iteration framework for solving boundary value problems with nonlinear radiative boundary conditions using Monte Carlo PDE solvers. We conclude by discussing current limitations and outlining promising directions for future research.

\paragraph{Acceleration.}
While we introduced a heteroscedastic regression method to denoise on-boundary solution estimates, the overall framework can benefit from additional variance reduction and acceleration techniques. Although random walks are initiated on the boundary, they frequently traverse the interior of the domain, where existing methods such as harmonic caching~\cite{zhou2025harmonic}, path guiding~\cite{huang2025guiding}, and neural caching~\cite{li2023neural} could be incorporated. Integrating these techniques into the proposed boundary-focused iteration framework remains an interesting direction for future work. Anderson acceleration~\cite{anderson1965iterative} and its variants~\cite{fang2009two, walker2011anderson} can also be incorporated into the fixed-point iteration.

\paragraph{Bias.} More principled debiasing strategies are possible. For example, independent sampling of higher-order moments (e.g., $T^3$) or techniques like \changed{\citet{misso2022unbiased} could} be incorporated to eliminate this bias. We leave such extensions to future work.

\paragraph{Transient Problems.}
This paper focuses on steady-state heat transfer and fills an important gap by supporting nonlinear radiative boundary conditions. Extending the proposed approach to transient heat transfer problems is nontrivial and remains unexplored. In particular, WoSt-type algorithms~\cite{sawhney2023walk, miller2024walkin} are fundamentally designed for elliptic problems and do not naturally generalize to time-dependent settings. Developing Monte Carlo solvers that can jointly handle temporal evolution and nonlinear boundary interactions is an open challenge.

\paragraph{Physical Model.}
This work focuses on incorporating radiative boundary conditions into Monte Carlo PDE solvers under the assumption of constant, gray-body emissivity. Several physically more sophisticated radiation models remain unexplored. In practice, the emissivity $\epsilon$ may depend on temperature, material properties, or wavelength, introducing additional nonlinearity and coupling across spectral bands. Modeling spectrally varying or temperature-dependent emissivity would require extending the current formulation beyond the scalar Stefan-Boltzmann law. We leave the integration of these more complex physical models to future work.

In summary, we proposed a fixed-point iteration-based Monte Carlo PDE solver for radiative heat transfer that preserves the geometric flexibility of Monte Carlo methods and applies directly to complex geometries without preprocessing. Compared to simple linearization strategies, the proposed approach empirically reaches significantly improved accuracy from imprecise initial guesses, but it does not provide a global theoretical convergence guarantee (Section~\ref{sec:failure_cases}, Figure~\ref{fig:failure}). We believe this work opens the door to a broader class of nonlinear boundary problems in Monte Carlo-based PDE solvers.

\bibliographystyle{ACM-Reference-Format}
\bibliography{ref}

\appendix

\section{Proof of Theorems}

\subsection{Proof of Lemma \ref{thm:comparison}}
\label{sec:proof1}
\begin{proof}
Define $r(x) = u_0(x) - u(x)$. Subtracting the governing equations yields
\begin{equation}
\Delta r = 0 \quad \text{in } \Omega,
\end{equation}
and
\begin{equation}
r = 0 \quad \text{on } \partial\Omega_D.
\end{equation}

On the radiative boundary $\partial\Omega_R$, we have
\begin{equation}
\frac{\partial r}{\partial n}
= -\mu p^3 u_0 + \mu u^4.
\end{equation}
Rearranging terms gives
\begin{equation}
\frac{\partial r}{\partial n} + \mu p^3 r
= \mu \bigl(u^4 - p^3 u\bigr)
= \mu u \bigl(u^3 - p^3\bigr).
\end{equation}
Since $p(x) \le u(x)$ on $\partial\Omega_R$, the right-hand side is nonnegative, and therefore
\begin{equation}
    \label{proof:robin}
    \frac{\partial r}{\partial n} + \mu p^3 r \ge 0
    \quad \text{on } \partial\Omega_R.
\end{equation}

Since $\Delta r = 0$, by the maximum principle of harmonic functions, the minimum value of $r$ in $\bar{\Omega}$ appears on the boundary $\partial \Omega$. Suppose the position taking the minimum value is $x_0$, then
\begin{equation}
    \frac{\partial r}{\partial \vec{n}}(x_0) < 0.
\end{equation}
If $r(x_0) < 0$, we have
\begin{equation}
    \frac{\partial r}{\partial \vec{n}}(x_0) + \mu p^3 r(x_0) < 0,
\end{equation}
since $\mu \geq 0$ and $p \geq 0$, which \changed{contradicts} the inequality \eqref{proof:robin}. Thus, we claim that $r(x) \geq 0$ for all $x \in \bar{\Omega}$.
\end{proof}

\section{Moving Least Squares Surface Representation}
\label{sec:appendix_mls}

To evaluate proxy solution values at arbitrary surface query points, we approximate the boundary proxy solution field using a moving least squares (MLS) reconstruction over Monte Carlo samples.

Given a set of boundary samples $\{(\mathbf{x}_i, T_i)\}_{i=1}^N$, the temperature at a query point $\mathbf{x}$ is approximated by a local linear model
\begin{equation}
T(\mathbf{x} + \mathbf{q}) \approx \mathbf{p}(\mathbf{q})^\top \boldsymbol{\theta}, \quad 
\mathbf{p}(\mathbf{q}) = [1, q_1, \dots, q_d]^\top,
\end{equation}
where $\mathbf{q}_i = \mathbf{x}_i - \mathbf{x}$. The coefficients $\boldsymbol{\theta}$ are obtained by solving the weighted least squares problem
\begin{equation}
\min_{\boldsymbol{\theta}} \sum_{i \in \mathcal{N}(\mathbf{x})} 
w(\|\mathbf{q}_i\|)\left(\mathbf{p}(\mathbf{q}_i)^\top \boldsymbol{\theta} - T_i\right)^2,
\end{equation}
with Gaussian kernel
\begin{equation}
w(r) = \exp\left(-\frac{r^2}{h^2}\right).
\end{equation}

The neighborhood $\mathcal{N}(\mathbf{x})$ consists of all samples within radius $r$ of $\mathbf{x}$. If fewer than four neighbors are found, we fall back to local averaging for numerical robustness. The temperature estimate is given by the constant term $T(\mathbf{x}) = \max(\theta_0, 0)$.

Algorithm~\ref{alg:mls} summarizes the MLS evaluation procedure.

\begin{algorithm}[]
    \caption{Moving Least Squares (MLS) Evaluation}
    \label{alg:mls}
    \begin{algorithmic}[1]
    \Require Query point $\mathbf{x} \in \mathbb{R}^d$, point cloud $\{\mathbf{x}_i, T_i\}_{i=1}^N$, neighborhood radius $r$, kernel width $h$
    \Ensure Estimated value $T(\mathbf{x})$
    
    \State Initialize empty neighbor set $\mathcal{N} \gets \emptyset$
    
    \For{$i = 1$ to $N$}
        \If{$\|\mathbf{x}_i - \mathbf{x}\| \le r$}
            \State Add index $i$ to $\mathcal{N}$
        \EndIf
    \EndFor
    
    \If{$|\mathcal{N}| < 4$}
        \State \Return $\frac{1}{|\mathcal{N}|} \sum_{i \in \mathcal{N}} T_i$
    \EndIf
    
    \State Define linear basis $\mathbf{p}(\mathbf{q}) = [1, q_1, q_2, \dots, q_d]^\top$
    
    \For{each $i \in \mathcal{N}$}
        \State $\mathbf{q}_i \gets \mathbf{x}_i - \mathbf{x}$
        \State $w_i \gets \exp\left(-\frac{\|\mathbf{q}_i\|^2}{h^2}\right)$
        \State $\mathbf{A}_i \gets \mathbf{p}(\mathbf{q}_i)$
        \State $f_i \gets T_i$
    \EndFor
    
    \State Assemble weighted normal equations:
    \[
    (\mathbf{A}^\top \mathbf{W} \mathbf{A}) \boldsymbol{\theta} = \mathbf{A}^\top \mathbf{W} \mathbf{f}
    \]
    
    \State Solve for coefficients $\boldsymbol{\theta}$
    
    \State \Return $T(\mathbf{x}) = \max(\theta_0, 0)$
    
    \end{algorithmic}
\end{algorithm}

\end{document}